
\documentclass[a4paper]{article}


\usepackage{graphicx}%
\usepackage{multirow}%
\usepackage{amsmath,amssymb,amsfonts}%
\usepackage{amsthm}%
\usepackage{mathrsfs}%
\usepackage[title]{appendix}%
\usepackage{xcolor}%
\usepackage{textcomp}%
\usepackage{manyfoot}%
\usepackage{booktabs}%
\usepackage{algorithm}%
\usepackage{algorithmicx}%
\usepackage{algpseudocode}%
\usepackage{listings}%
\usepackage{amsthm}


\usepackage{authblk}

\theoremstyle{thmstyleone}%
\newtheorem{theorem}{Theorem}
\newtheorem{proposition}[theorem]{Proposition}%
\newtheorem{lem}[theorem]{Lemma}
\newtheorem{cor}[theorem]{Corollary}

\theoremstyle{thmstyletwo}%
\newtheorem{remark}{Remark}%

\theoremstyle{thmstylethree}%
\newtheorem{definition}{Definition}%

\title{Bisparse Blind Deconvolution through Hierarchical Sparse Recovery}

\author[1]{Axel Flinth}
\author[2,3]{ Ingo Roth}
\author[3]{Gerhard Wunder}

\affil[1]{Department of Mathematics and Mathematical Statistics, Umeå University, Sweden}
\affil[2]{Quantum Centre, Technology Innovation Institute, Abu Dhab, UAE}
\affil[3]{Dahlem Center for Complex Quantum Systems, Freie Universität Berlin, Germany}

\date{\today}

\usepackage{tikz}
\usepackage{xfrac} 

\definecolor{ingo}{rgb}{1,.2,.4}

\definecolor{axel}{rgb}{0,.6,.8}

\newcommand{\dd}{\mathrm{d}}

\usepackage{comment}

\usepackage{hyperref}
\usepackage{verbatim} 

\usepackage{graphicx}
\usepackage{caption}
\usepackage{subcaption}
\usepackage{amssymb}
\usepackage{ textcomp } 
\usepackage{color}
\usepackage{enumerate}
\usepackage{mathrsfs}
\usepackage{multicol}
\usepackage{hyperref}
\usepackage{amsopn}

\usepackage{amsmath}
\usepackage{amsthm}
\usepackage{amssymb}

\usepackage{multicol}



\newcommand{\C}{\mathbb{C}}

\newcommand{\F}{\mathbb{F}}
\newcommand{\K}{\mathbb{K}}
\newcommand{\N}{\mathbb{N}}

\newcommand{\R}{\mathbb{R}}

\newcommand{\calA}{\mathcal{A}}

\newcommand{\calC}{\mathcal{C}}

\newcommand{\calH}{\mathcal{H}}

\newcommand{\calK}{\mathcal{K}}
\newcommand{\calL}{\mathcal{L}}
\newcommand{\calM}{\mathcal{M}}
\newcommand{\calN}{\mathcal{N}}

\newcommand{\calT}{\mathcal{T}}

\newcommand{\abs}[1]{\left\vert #1 \right\vert}
\newcommand{\norm}[1]{\Vert #1 \Vert}
\newcommand{\snorm}[1]{\norm{#1}_{2 \to 2}}

\newcommand{\set}[1]{\left\lbrace #1\right\rbrace}
\newcommand{\sse}{\subseteq}

\newcommand{\polylog}{\mathrm{polylog}}


\newcommand{\prb}[1]{\mathbb{P}\left( #1 \right)}
\newcommand{\erw}[1]{\mathbb{E}\left( #1 \right)}

\usepackage{dsfont}

\newcommand{\geqsim}{\gtrsim}
\newcommand{\leqsim}{\lesssim}


\DeclareMathOperator{\supp}{supp}
\DeclareMathOperator{\id}{id}

\newcommand{\argmin}{\mathop{\mathrm{argmin}}}

\newtheorem{assumption}{Assumption}

\newtheorem*{res}{Main Result}

\theoremstyle{definition}


  \usepackage[margin=2.5cm]{geometry}

\begin{document}


\maketitle

\begin{abstract}
 The \emph{hierarchical sparsity framework}, and in particular the  HiHTP algorithm, has been successfully applied to many relevant communication engineering problems recently, particularly when the signal space is hierarchically structured.  In this paper, the applicability of the HiHTP algorithm for solving the bi-sparse blind deconvolution problem is studied. The bi-sparse blind deconvolution setting here consists of recovering $h$ and $b$ from the knowledge of $h*(Qb)$, where $Q$ is some linear operator, and both $b$ and $h$ are both assumed to be sparse. The approach rests upon lifting the problem to a linear one, and then applying HiHTP, through the \emph{hierarchical sparsity framework}. 
    Then, for a Gaussian draw of the random matrix $Q$, it is theoretically shown that an $s$-sparse $h \in \K^\mu$ and $\sigma$-sparse $b \in \K^n$ with high probability can be recovered when $\mu \geqsim s\log(s)^2\log(\mu)\log(\mu n) + s\sigma \log(n)$.

\textbf{Keywords:} Blind deconvolution, sparsity, restricted isometry property.
\end{abstract}

\section{Introduction}
Consider an underdetermined system of linear equations, i.e.  $y = Hx$ for an $H \in \K^{m,n}$ with $m \ll n$ ($\K$ denotes either $\R$ or $\C$). Such a system does of course not have a unique solution. However, if one assumes some structure of the ground truth $x_0$, it can still be recovered. This is the \emph{compressed sensing} paradigm\cite{FouRau2013}. The original example of such a structure is \emph{sparsity}, i.e. the assumption that only a few entries $x_0(i)$ of the ground truth are non-zero. In many applications such as communication engineering~\cite{Wunder2019_TWC}, there is even more prior knowledge on the sparsity, i.e. the signals 
are \emph{hierarchically sparse}.
\begin{definition}
\cite{hiHTP}
 Let $\mu, n, s, \sigma \in \N$. A tensor $$ w = \sum_{k\in [\mu]} e_k \otimes w_k \in \K^{\mu} \otimes \K^n  $$ is \emph{$(s,\sigma)$-sparse} if 
    \begin{itemize}
        \item At most $s$ of the blocks $w_i \in \K^n$ are non-zero.
        \item Each block $w_i$ is $\sigma$-sparse.
    \end{itemize}
    Here  $\{e_k\}_{k \in [\mu]}$ denotes the standard basis for $\K^\mu$ with entries $e_k(j) = 1$ for $k = j$ and $e_k(j) = 0$ otherwise. We will also simply refer to $w$ as \emph{hierarchically sparse}.
\end{definition}
In slightly different terms, we can think of a hierarchically sparse signal as a signal subdivided into $\mu$ fixed groups, each of sized $n$. Each such group is either zero, or $\sigma$-sparse, and in addition, at most $s$ groups are non-zero. See Figure \ref{fig:sparse_hisparse} for an illustration.

\subsection{HiHTP}
To recover a structured signal from linear measurements, so called \emph{hard thresholding pursuit} (HTP) algorithms can be applied. In short, the HTP entails alternately taking gradient descent steps to solve a least-squares problem and projections onto the set of structured signals. Originally developed for sparse vectors  \cite{foucart2011hard}, it was generalized to a considerably more inclusive model, under the moniker of \emph{model-based compressed sensing} in \cite{baraniuk2010model}. Applying the concept to hierarchically sparse signal yields the hierarchical hard thresholding pursuit, the \emph{HiHTP} (Algorithm \ref{alg:HiHTP})\cite{hiHTP}.

The main feature of HiHTP that distinguishes it from the generic model-based compressed sensing algorithm is that the projection step (line \ref{alg:HiHTP:TH} in Algorithm \ref{alg:HiHTP}) can be performed efficiently: To calculate the best $(s,\sigma)$-approximation of a tensor $w$, one first calculates the best $\sigma$-sparse approximation to each block $w_k$, which simply entails selecting the $\sigma$ largest entries of each block. One then selects the $s$ of the $\sigma$-sparse proxies 
with the highest $\ell_2$-norm. The computational complexity of such a projection is $O(\mu n)$, which is the same complexity as the projection onto the set of sparse signals. In addition, the first phase of approximations can be highly parallelised, which is not the case for the projection onto sparse signals.

\begin{algorithm}[tb]      
	\caption{(HiHTP)} 
	\label{alg:HiHTP}
	\begin{algorithmic} [1]
 		\Require Measurement operator $\calA: \K^\mu \otimes \K^n$, measurement vector $y$, block column sparsity $(s,\sigma)$
 		\State $w_0 = 0$ 
 		\Repeat
 			\State Calculate the support  $\Omega^{k+1}$ of the best approximation of $(w_k + \calA^\ast (y - \calA w_k))$ in the set of $(s,\sigma)$-sparse vectors. \label{alg:HiHTP:TH}
 			\State $w^{k+1} = \argmin_{z \in \K^{\mu} \otimes \K^{n}} \{ \|y - \calA z\|,\  \supp(z) \subset \Omega^{k+1} \}.$ \label{alg:HiHTP:LS}
 		\Until{stopping criterion is met at $\tilde{k} = k$}
   
 	\State \Return $(s,\sigma)$-sparse vector $w^{\tilde{k}}$
	\end{algorithmic}
\end{algorithm}

\begin{figure}
    \centering
    \includegraphics[width=0.5\linewidth]{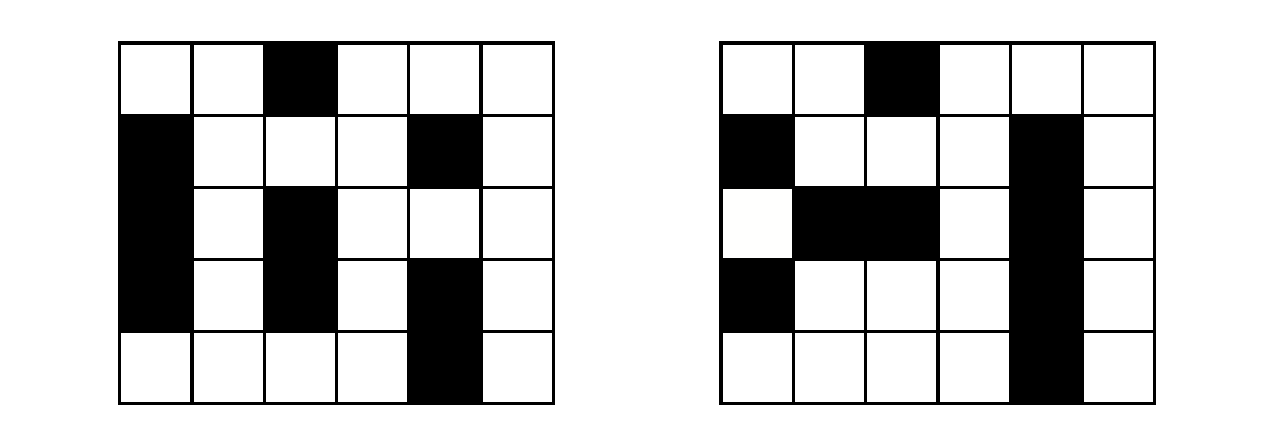}
    \caption{Two tensors which are both $9$-sparse. However, only the left one is hierarchically sparse -- it is $(3,3)$-sparse.}
    \label{fig:sparse_hisparse}
\end{figure}

As for any model-based compressed sensing algorithm \cite{baraniuk2010model}  the HiHTP will converge to a structured signal under a \emph{restricted isometry} condition. The standard restricted isometry constants are defined for sparse signals,
\begin{align*}
    \delta_s(A) = \min_{\substack{w\text{ }  s\text{-sparse} \\ 
         \norm{w}\leq 1}} \abs{\norm{\calA w}^2-\norm{w}^2}
\end{align*}
and the hi-sparse version similarly:

\begin{definition} \cite{hiHTP}
    Let $\calA: \K^\mu \otimes \K^n \to \K^m$. The $(s,\sigma)$-HiRIP constant of $\calA$ is given by
     \begin{align*}
         \delta_{(s,\sigma)}(\calA) =\min_{\substack{w\text{ }  (s,\sigma)\text{-sparse} \\ 
         \norm{w}\leq 1}} \abs{\norm{\calA w}^2-\norm{w}^2}.
     \end{align*}

\end{definition}

\begin{theorem} \label{th:HiRIP} \cite[Theorem 1]{hiHTP}
    If $\delta_{3s,2\sigma}<\sfrac{1}{\sqrt{3}}$, the sequence $w^k$ defined by HiHTP for a $y = \calA w_0 + e$ for a $(s,\sigma)$-sparse ground truth $w_0$ and error vector $e\in \K^m$ satisfies
    \begin{align*}
        \norm{w^k-w_0} \leq \rho^k \norm{w^0-w_0} + \tau \norm{e},
    \end{align*}
    with $\rho = (\sfrac{2\delta_{3s,2\sigma}}{(1-\delta_{2s,2\sigma}^2)})^{\sfrac{1}{2}}<1$, $\tau = \sfrac{5.15}{(1-\rho)}$.
In particular, when the  measurements $y$ are noise-free, $w^k$ converges towards $w_0$ at a linear rate.
\end{theorem}

     We will, deliberately imprecisely, say that an operator $\calA$ \emph{has the HiRIP}, if the HiRIP-constant is small enough for big enough sizes of $s$ and $\sigma$ to ensure the stable and robust recovery of any $(s,\sigma)$-sparse vector, as in Theorem \ref{th:HiRIP}.


In a series of papers \cite{hiHTP,roth2018hierarchical,flinth2021hierarchical,eisert2021hierarchical,Wunder2019_TWC,Wunder2023_CommNet} by, among others, the authors of this article  have shown that a number of important classes of operators have the HiRIP. 
\begin{itemize}
\item \underline{Gaussian operators.} Operators $A:\K^\mu \otimes \K^n\to \K^m$ whose matrix representations have i.i.d Gaussian entries, have the HiRIP with high probability when $m \geqsim s \log(\mu) +s\sigma \log(n)$, where $\geqsim$ refers to an inequality up to a universal multiplicative constant.
\item \underline{Kronecker products.} If $A\in \K^{m_0,\mu}$ and $B \in \K^{m_1,n}$ have  $s$- and $\sigma$-RIP constants $\delta_s(A), \delta_\sigma(B)$, respectively, the Kronecker product $A\otimes B: \K^\mu \otimes \K^n\to \K^{m_0} \otimes \K^{m_1}$ has $(s,\sigma)$-HiRIP constant smaller than $\delta_s+\delta_\sigma + \delta_s\delta_\sigma$.
\item \underline{Hierarchical operators.} A \emph{hierarchical measurement operator} is an operator $\mathcal{H}:\K^\mu \otimes \K^n\to \K^{m_0} \otimes \K^{m_1} $ defined as a mixture of matrices $(B_\ell)_{\ell \in [\mu]}$, $B_\ell \in \K^{m_1,n}$ as follows:
    \begin{align*}
        \calH(w)_k = \sum_{\ell \in [\mu]}a_{k\ell} B_\ell w_\ell
    \end{align*}
    for some matrix $A\in \K^{m_0,n}$. If $A$ has the $s$-RIP constant $\delta_s(A)$, and all $B_\ell$ have $\sigma$-RIP constants smaller than $\delta$, the $(s,\sigma)$-HiRIP constant of $\calH$ is not larger than $\delta_s(A) + \delta + \delta_s(A) \cdot \delta$. 
\end{itemize}

The purpose of this article is to prove that a new class of operators has the $(s,\sigma)$-RIP: namely, \emph{blind deconvolution operators}.

\subsection{Blind deconvolution}
The \emph{blind deconvolution problem} is the problem of determining a filter $h \in \K^\mu$ and a signal $x\in \K^\mu$ from its (circular) convolution, i.e.
\begin{align*}
    \K^\mu \ni y = h*x, \quad y(\ell) = \sum_{k\in [\mu]} h(\ell-k)x(k).
\end{align*}
It is standard to lift this bilinear recovery problem to a linear one (a detailed derivation will be presented later). The resulting problem is severely underdetermined -- the $\mu$ entries of $y$ will not be enough to uniquely identify the $\mu^2$-dimensional $h \otimes x$. Even the fact that $h\otimes x$ is a one-rank tensor is not enough -- additional structural assumptions on $h$ and $x$ are necessary. In this paper, we will assume that $h$ is $s$-sparse, and that $x$ can be written as $x=Qb$ for a known matrix $Q\in \K^{\mu,n}$, and $b\in \K^n$ a $\sigma$-sparse vector. Hence, we are confronted with the \emph{bi-sparse} version of the blind deconvolution problem.

These sets of assumptions can be motivated using the following simple model of mobile communications:  Alice and Bob are communicating with each other. Bob would like to send a message $b \in \K^n$ to Alice. To do so, he first linearly encodes his message in a signal $Qb \in \K^\mu$, and then sends it over a wireless channel to Alice. Due to scattering in the environment, Alice will not receive $Qb$, but rather a superposition of several time- and phase-shifted versions of the signal $Qb$ -- i.e., she will observe the convolution  $y =  h*(Qb)$ of $Qb$ with a filter $h \in \K^\mu$ {modelling the impulse response of the environment}. 
Since the scattered signal typically reaches the receiver antenna along a few, distinct paths, it is reasonable to assume that the filter $h$ is  \emph{sparse} \cite{BajwaEtAl:2010:CompressedChannelSensing}. The message $b$ can on the other hand be rendered sparse by system design. 

Using the lifting approach, the blind deconvolution problem then effectively turns into recovery problem of the tensor $h\otimes b$. Importantly, if both $h$ and $b$ are sparse, $h\otimes b$ is hierarchically sparse:  The blocks of $h\otimes b$ are given by  $h(k) b$, $k \in [\mu]$ -- they are all $\sigma$-sparse, and only the $s$ values for $k$ for which $h(k)\neq 0 $ corresponds to a non-zero block. This makes it natural to apply the HiHTP algorithm to the problem. This was done by a subset of the authors in \cite{wunder2018secure}, even adding more layers of hierarchy (see Section \ref{sec:demixing}) due to multiple antennas and users with good empirical success . It was also nicely applied in \cite{Wunder2023_CommNet,wunder2024perfect} as a component of a secure communication scheme. In neither of the papers, theoretical guarantees were provided, however. \emph{To provide such is essential in the security context}. {If they are not given, one can e.g. not establish guarantees for provable correctness of the scheme in \cite{Wunder2023_CommNet,wunder2024perfect}.} 

\subsection{The main result} 
Let us  present a, somewhat simplified, version of the main result of this article.
\begin{res} Consider the operator $\mathcal{C}$ corresponding to the blind deconvolution measurement $(h,b) \mapsto h*(Qb)$, where $Q\in \K^{\mu,n}$ is a Gaussian matrix. If
   \begin{align*}
       \mu \geqsim s\log(s)^2\log(\mu)\log(\mu n) +s\sigma \log(n),
   \end{align*}
   $\calC$ with high probability have the $(s,\sigma)$-HiRIP.
 \end{res}

Disregarding $\log$-terms, the blind-deconvolution operators hence have the HiRIP under the same sample complexity assumptions as the considerably simple Gaussian ones. Counting degrees of freedoms in describing a $(s,\sigma)$-sparse vector, it is also the sample optimal one.

\begin{remark} It is important to consider that with regards to the bisparse recovery problem, a sampling complexity of $s\sigma$ measurement far from  optimal. Counting degrees of freedom in the two vectors $h$ and $b$, one would instead hope for $s+\sigma$ measurements to be enough. In other words, $h\otimes b$ has more structure than only being $(s,\sigma)$-sparse, that one intuitively would like to take advantage of. However, if one  does not make additional assumptions on $h$ and $b$, a sample complexity of $s\sigma\cdot \mathrm{polylog}(\mu,n)$ is actually the state of the art for feasible algorithms\cite{geppert2019sparse}. We will give a detailed explanation of this in Section \ref{sec:litrev}.
\end{remark}

\subsection{Extensions}
  {A feature of the hierarchical approach to sparse deconvolution problems is that it readily generalizes to more complex settings, e.g.\ arising in multi-user communications. } 
   In this paper also discuss how to solve a combined deconvolution and demixing problem using the hierarchically sparse framework: 
   Given $M$ measurements of the form
   \begin{align*}
       y_q = \sum_{p \in [N]} d_{q,p}  h^p*(Q_pb^p),
   \end{align*}
   where $d_{q,p}\in \K,  \ q\in [M], p \in [N]$ are `mixing factors', recover the collection of $N$ filter-signal pairs $(h^p,b^p), p \in [N]$. 
   This problem, e.g., appears when users are separated both through their spatial angles and their multipath delay as in \cite{Wunder2019_TWC}. Under the assumption that only $S$ filters messages are non-zero, we can interpret this again as a hierarchical sparse recovery problem (albeit now in three levels of hierarchy). 
   Using recent results from  \cite{flinth2021hierarchical}, we show that under the same assumptions on the individual $Q_p$ as above, HiHTP can recover all signals and all filters from  $M \geqsim S\log(N)$ mixtures.

\subsection{Outline of the paper}
We begin our paper with a detailed literature review of different computational approaches to the blind deconvolution problem in Section~\ref{sec:litrev}. In Section~\ref{sec:HiHTP}, we give a formal introduction to our approach and our assumptions, and also formally state the main result, Theorem~\ref{th:main_result}. {We also briefly discuss the multi-user setup there.} Section~\ref{sec:proof_main_result} is in its entirety devoted to proving the main result. In Section~\ref{sec:numerics}, we perform a small numerical experiment to validate our theory.

\paragraph{Notation} Most of the notation is either standard or will be introduced at first use. 
For a vector $v\in \F^n$, we write $v(i)$ for its $i$:th entry. 
We use $\norm{\, \cdot \,}$ to denote the Euclidean norm for both vectors and tensors, $\norm{\, \cdot \,}_F$ for the Frobenius norm of a matrix, and $\norm{\, \cdot \,}_{2\to2}$ for the induced $\ell_2\to\ell_2$-operator norm of a matrix.

\section{Literature Review} \label{sec:litrev}

In this section, we give a literature review of previous approaches to the blind deconvolution problem in general, and  sparse blind deconvolution in particular. The approaches can broadly be divided into two categories:

\begin{itemize}
    \item As was described in the intro, one can identify the bilinear map {$\gamma(h,b)=h*Qb$}
    on  $\K^\mu \times \K^n$ with a  linear map $\calC$ on $\K^\mu \otimes \K^n$. One can then minimize
    \begin{align*}
        K: \K^\mu \otimes \K^n \to \R, \quad w \mapsto K(w) ={\calK}(y, \calC(w),w).
    \end{align*} 
    where $\calK$ refers to some suitable loss function. This procedure is generally referred to as \emph{lifting}, and we will therefore refer to this class of methods as \emph{lifted} methods. HiHTP is within this class.
    
   \item One can also directly minimize a function of the form
    \begin{align*}
        L: \K^\mu \times \K^n \to \R, \quad (h,b) \mapsto L(h,b) =\calL(y,h*(Qb),h,b),
    \end{align*}
    defined in terms of a suitable loss function $\calL$. Let us call such methods \emph{direct}.

\end{itemize}

We will give a brief overview of the relevant results from the literature about methods from both categories. 
Throughout the discussion, it is instructive to remember that the optimal sample complexity (i.e. the number of needed observations for injectivity) is given by $\mu + n$ in the dense setting and $s+\sigma$ in the sparse setting \cite{kech2017optimal,li2017identifiability}.

\paragraph{Lifted approaches} As for lifted approaches, it is crucial to understand that the lifted ground truth signal {$h_0\otimes b_0$ or equivalently its matrix form $h_0b_0^*$}, {where $b_0^*$ denotes the Hermitian transpose of $b_0$,} simultaneously enjoys two structures: it has rank one \emph{and} is sparse (in a structured manner). Each of these structures can individually be exploited using convex regularizers -- if we take the low-rank property as the crucial one, the nuclear norm is the canonical regularizer \cite{ahmed2013blind,ling2017blind,jung2018blind}, whereas if the sparsity is taken as the crucial property, the $\ell_1$-norm \cite{ling_2015} or its relative, the $\ell_{1,2}$-norm \cite{flinth2018sparse} should be used. While the convex approaches are stable, mathematically pleasing and globally solveable by off-the-shelf algorithms, they are computationally intensive and, thus, often impractical. Furthermore, recovery guarantees are only known for generic subspace models.  
    Under similar assumption of \emph{known supports} as above, the nuclear norm approach recovers the ground truth when $\mu \geqsim \max(s,\sigma)\log(\mu)^2$ \cite{ling2017blind,jung2018blind,Chen2021}. However, not assuming know supports, that guarantee degenerates to $\mu \geqsim \max(\mu,n)\log(\mu)^2$. The sparse models ($\ell_1$ and $\ell_{1,2}$) recover the ground truth with high probability when $\mu \geqsim s\sigma \log(sn)\log(\mu)^2$ \emph{under the assumption that the support of $h$ is known} \cite{ling_2015,flinth2018sparse}. If the support of $h$ is not known, the guarantee again degenerates to $\mu \geqsim \mu\sigma \log(n)\log(\mu)^2$. 
    
    There also is a natural non-convex-way to solve the lifted problem: a gradient descent projected onto the set of (bi)-sparse \emph{and} low-rank matrices. As is thoroughly discussed in \cite{foucart2020jointly}, there is however no practical algorithm available to compute this projection. 
    The bi-sparse unit-rank projection is known as the \emph{sparse PCA} problem, that has been shown to be worst-case NP-hard, as well as under approximation and on average    \cite{magdon-ismail_np-hardness_2017, chan_approximability_2016, brennan_optimal_2019}. 
    A canonical way to circumvent this obstacle is to alternate between projections onto the two sets. This approach is for instance investigated in \cite{eisenmann2021riemannian}. There, a \emph{local} convergence guarantee is presented under optimal sample complexity {using the considerably simpler measurement model with a Gaussian linear map without the structure of the convolution.} A global one is not.
    
    In this context, \cite{bahmani2016near} should also be mentioned -- there the authors obtain global convergence in just two alternations steps, however by assuming a nested measurement structure tailor-made for a jointly low-rank and sparse setting, which is often not given. 

    Finally, in \cite{ahmed2018leveraging}, a different but related problem was considered. In the paper, they consider the case of \emph{several} convolutions $y_i = h*(Q_i b_i), i \in [N]$ are independently observed. Importantly, the authors do \emph{not} to assume that the support of $h$ is known, only that it is sparse in a basis  $B$ with a certain incoherence. However, they instead assume that the supports of the $b_i$ are known, and that the $Q_i$ are \emph{independently} Gaussian distributed. Under those assumptions, they achieve recovery (through a nuclear norm minimization approach) of both $h$ and \emph{all} $b_i$ already when $\mu \geqsim (\sigma + s\log(s)^2)\log(\mu N)^4$ and $N\geqsim \log(N\mu)$. If we drop the assumption of known supports of the $b_i$, the guarantee again degenerates, to  $\mu \geqsim (\mu + s\log(s))\log(\mu s)^2$.

  \paragraph{{Direct} approaches} Among {direct approaches}, \emph{alternating minimization}, or \emph{sparse power factorization}, is probably the most prominent. It was introduced in \cite{NetrapalliAlternating} as a means for solving the so-called \emph{phase-retrieval problem} and later adapted to blind deconvolution in \cite{Bresler2015blind,lee2016blind}. 
    Alternating minimization  generally refers to taking turns in solving the following two problems,
    \begin{align*}
        \operatorname*{minimize}_{h \text{ (sparse) }}\ L(h,b') \quad\text{and}\quad   \operatorname*{minimize}_{b \text{ (sparse) } }\ L(h',b).
    \end{align*}
    Since the convolution is linear in each argument, each subproblem is effectively a classical compressed sensing problem, and can be solved using a number of techniques, e.g. iterative hard thresholding \cite{foucart2011hard} or CoSAMP \cite{needell2009cosamp}. 
    
     As for the question of success of alternating minimization, the authors of \cite{Bresler2015blind,lee2016blind} give a recovery guarantee in a special setting: First, they use a generic subspace model, where $h$ has a representation $h=Bg$ with some $s$-sparse vector $g$, rather than being sparse itself, {and $B$ is assumed to be a matrix with random Gaussian entries.} 
     The measurement matrix $Q$ is also assumed to be Gaussian. 
     An additional important assumption is that  $h$ and $x=Qb$ are \emph{spectrally flat},\ i.e.\ have absolutely approximately constant Fourier transforms. 
     The spectral flatness assumption is not only a regularity assumption on the signals.  
     It is actively exploited in a projection step of the authors' algorithm. 
     The projection step is hard to perform exactly.  
     The authors resort to heuristics for this step. However, accepting this caveat, the authors prove convergence already when only observing $(s+\sigma)\log(\mu)^5$ of the entries in $h*x$, which is up to log-terms sample optimal. 
    
    In \cite{li2019rapid}, Li, Ling, Strohmer and Wei proposed a different  non-convex approach {for the related problem where $h$ and $b$ are in known subspaces}. Concretely, they assume that $h$ is the Fourier transform of a vector $g \in \C^\mu$ supported on its first $s$ entries (rather than being $s$-sparse), and $b=Bc$ for some $c\in \C^\sigma$ and $B$ a Gaussian. Needless to say, this variation of the problem amounts to a significant simplification. 
    For a non-convex, smooth, loss function including regularization terms, together with a careful initialization, they prove convergence.  
    They also assume a spectral flatness condition. This assumption is again built into the method, now {implicitly in that it governs the size of a constant in the loss function}. 
    They prove global convergence already when $\mu \geqsim \max(s,\sigma)\log(\mu)^2$. 
    Note that in our setting, where the supports of the vectors are unknown, this condition would read $\mu \geqsim \max(\mu,n)\log(\mu)^2$, which would be vacuous.

    In \cite{lee2017near}, a RIP-based recovery theory is formulated for a version of sparse power factorization. which does \emph{not} explicitly depend on spectral flatness. They show that if the measurement operator fulfils a isometry property \emph{restricted to the set of bisparse signals}, sparse power factorization will recover any pair of sparse vectors, if started near enough the ground truth. Note that the latter 'bisparse RIP' is strictly weaker than the $(s,\sigma)$-HiRIP, and they indeed show that Gaussian measurement fulfil it already when the number of measurements are on the order of $(s+\sigma)\polylog(\mu,n)$. Although not carried out in the paper, it is clear to the specialists that their proof idea, in tandem with results from \cite{KrahmerSuprema} (that we will apply ourselves), should yield a similar needed sample complexity for blind deconvolution.

    Their recovery guarantee is however only local. They do describe a initialization procedure that will work with an optimal amount of measurements under a spectral flatness condition. This spectral flatness was weakened in \cite{geppert2019sparse}, but not broken: If one does not make any additional assumption on $h$ and $b$, one will not get a good enough initialization with less than $s\sigma\polylog(\mu,n)$ measurements.

   \paragraph{Conclusion} {The findings of this literature review can be summarized as follows.}
   \begin{itemize}
      
       \item \emph{Lifted methods}. There are convex and non-convex approaches to solve the lifted problems. The convex ones converge globally and are simple to implement, but are computationally costly and cannot recover the signals at optimal sample complexity. Their convergence can only be guaranteed when either the filter or the signal a-priori has known support.
       
       The non-convex ones are fast and converge quickly, but only work sample-optimally when computationally intractable projection routines are assumed.  When heuristics and approximations of the projection steps are applied instead, only local convergence results are known for practical measurement models.

        \item \emph{Direct methods}. These operate at optimal sample complexity locally. However, if initialization is taken into account,  additional assumptions on $h$ and $b$ are needed for an optimal sample complexity. If these are not made, $s\sigma\polylog(\mu,n)$ measurements are again needed.
   \end{itemize}

    We in particular conclude that to achieve an optimal sample complexity ($(s+\sigma) \cdot \mathrm{polylog}(\mu,n)$), one either needs to settle for local convergence guarantees, make additional assumptions on the signals and filters, or use infeasible projection steps. Global convergence without additional assumptions on $h$ and $b$ using a feasible projection steps, $s \sigma \mathrm{polylog}(\mu,n)$ is the best known complexity. In this sense, the results we present in this article HiHTP are as good as the state of the art.

One should note that HiHTP, as a lifted method, is not as memory and computationally efficient as certain direct methods (which only needs to update estimates of $h$ and $b$). However, one should also note that the best known initalization methods, discussed in e.g. \cite{lee2017near}, are essentially as heavy in both memory and computational complexity as one step of HiHTP. Since HiHTP typically converges quite fast, the method is hence not significantly more complex in an asymptotic sense.

\section{Bisparse blind deconvolution with HiHTP} \label{sec:HiHTP}

Let us first agree on some notation. As outlined above, we are interested in the blind deconvolution program, i.e. recovering $h \in \K^\mu$ and $x\in \K^\mu$ from the convolution $h*x$. We thereby understand the convolution as circular, i.e. 
\begin{equation}
    (h \ast x)(\ell) = \sum_{k \in [\mu]} h(\ell - k)x(k)\, ,
\end{equation}
 Here, $[\mu]$ is the set of remainder classes modulo $\mu$, i.e. the set of non-negative integers from $0$ to $(\mu-1)$ equipped with the addition modulo $\mu$. 
 In the following also all indices are understood using this cyclic identification.

The map $(h,b) \mapsto h*(Qb)$ is bilinear, and can therefore be lifted to a linear map $ \calC : \K^n\otimes \K^\mu \cong (\K^n)^\mu \to \K^\mu$ by linearly extending
\begin{equation}\label{eq:calCDef}
    h \otimes b \mapsto \calC(h\otimes b) = h*(Qb)\, .
\end{equation}
This means that we can interpret the blind deconvolution problem as a linear reconstruction problem for the \emph{lifted vector} $h \otimes b \in \R^\mu \otimes\R^n$.  As was argued in the introduction, if $h$ is $s$-sparse and $b$ is $\sigma$-sparse, $h\otimes b$ will be $(s,\sigma)$-sparse. It will hence be recoverable from $y =\calC(h \otimes b)$ provided $\calC$ has the HiRIP.

To prove the HiRIP, we will use the following model for the linear map $Q$.

\begin{assumption} \label{ass:measurement}
    The matrix $Q \in \K^{\mu,n}$ can be decomposed as follows 
    \begin{align*}
        Q=UA
    \end{align*}
    where
    \begin{enumerate}
        \item $A \in \K^{m,n}$ is a matrix with $\sigma$-RIP constant $\delta_\sigma(A)<1$.
        \item   $U\in \K^{\mu,m}$ is a matrix with columns $(\gamma_j)_{j \in [m]}$, whose entries $\gamma^{-\sfrac{1}{2}}\gamma_{j}(i)$ are
        \begin{itemize}
            \item centered, i.e. $\erw{ \gamma_j(i)}=0$.
            \item independent (over $i$ and $j$),
            \item normalized, i.e. $\mathbb{E}(\abs{\gamma_j(i)}^2)=1$.
            \item sub-Gaussian variables, i.e. that there exists a number $R$ so that $\erw{\exp(\tfrac{\abs{\gamma}^2}{R^2})}\leq 2$.
        \end{itemize}
    \end{enumerate}
\end{assumption}

\begin{remark}
    \item A more common definition of subgaussianity of a random variable $X$ is a tail estimate
    \begin{align*}
        \prb{\abs{X} > t} \geq 2 \exp(-\tfrac{t^2}{\widetilde{R}^2}).
    \end{align*}
    In fact, this notion is equivalent to the we use, with $R\sim \widetilde{R}$ \cite[Prop. 2.5.2]{highdimprob}. 
    \end{remark}
    \begin{remark}
     The infimum of all $R$ for which $\erw{\exp(\tfrac{\abs{\gamma}^2}{R^2})}\leq 2$ is \emph{the subgaussian norm} of $\gamma$, $\norm{\gamma}_{\psi_2}$. This is really a norm, so that e.g. $\norm{\lambda\gamma}_{\psi_2} = \abs{\lambda}\norm{\gamma}_{\psi_2}$ for $\lambda \in \K$.
     \end{remark}
\begin{remark} Gaussian variables (in $\C$ and $\R$) are subgaussian, with $\norm{X}_{\psi_2} \sim \erw{\abs{X}^2}$. Another example of subgaussian variables are bounded variables - if $\abs{X}\leq \theta$ almost surely, $\norm{X}_{\psi_2}\leq \theta^2$.

\end{remark}

Let us stress that the above assumptions entails simply choosing $Q$ as a Gaussian matrix. Indeed, choosing $m=n$ and $A=\id$, $A$ trivially has $\delta_\sigma(A)=0$ for all $\sigma$, and $Q=U$ becomes Gaussian. The model is hence \emph{more general} than what is usually considered in the literature.

To understand the intuition of the more complicated structure $Q=UA$, let us  rewrite
\begin{align}
    \calC(h \otimes b ) = \sum_{k \in [\mu]} h_k e_k*(Qb) = \sum_{k \in [\mu]} e_k*(UA(h_k b)) \label{eq:blind_deconvolution}.
\end{align}
Applying $\calC$ amounts to first applying the operator {$Q$} to the vector $b$ and subsequently mixing the shifts of the resulting signal $e_k~*~(Qb), k \in [\mu]$ using the scalars $h_k$ as weights. 
Our {standard RIP assumption} on $A$ ensures that, if we had access to the measurement $A(h_kb) \in \R^m$, $k \in [\mu]$ (before applying $U$), we could recover $h_kb, k \in [\mu]$ (and thus both $h$ and $b$) using standard compressed sensing methods. 
We can however only access them in observing their mixture  \eqref{eq:blind_deconvolution}. 
Being able to recover the different summands of this mixture requires sufficent `incoherence' of the shifts. 
We achieve this with the operator $U$: it embeds the vectors $A(h_kb)$ into a higher-dimensional space {rendering the shifts incoherent}.

{From the point of view of applications in wireless communication, we can think of $Q$ as 
applying first a compressive encoding of 
a sparse message $b \in \R^n$ giving rise to the `code word' $Ab$. 
The code word is subsequently again encoded using $U$ for transmission via the channel $h$ into a sequence $UAb$.}

  {The nested model for $Q$ also has a nice theoretical consequence. As we noted before, there is no need for $Q$ to have this nested structure -- choosing $A=\id$ is completely viable. Using a non-trivial matrix $A$, however, typically allows one to considerably reduce the number of parameters needed to describe the measurement map.  
  The product $UA$ is specified using only $\mu\cdot m + m\cdot n$ parameters, whereas without compression, $Q \in \K^{\mu,n}$ has $\mu\cdot n$ degree of freedoms. 
  Since typically $n \gg \sigma$ and the $\sigma$-RIP of $A$ can be guaranteed already when $m \geqsim \sigma\log(n)$, the former number can be considerably smaller than the latter for small values of $\sigma$. 
  Thus, our low-rank decomposition of $Q$ constitutes a 
  significant de-randomization of the measurement ensembles without considerably complicating the proof. 
  With other more structured standard compressed sensing matrices for $A$ such as sub-sampled Fourier matrices, the complexity of specifying $Q$ and the computational complexity of the algorithm can be further reduced. } Finally, it will allow us to slightly improve the sample complexity bound: A term $\log(\mu n)$ can be turned into $\log(\mu m)$.

We are now in a position to formulate the main result.
\begin{theorem} \label{th:main_result}
Fix $\delta_0 \in (0,1)$ and let $\epsilon>0$. {Suppose $Q=UA$ fulfils the  Assumption~\ref{ass:measurement}} and it holds that 
\begin{align}
      \mu \geqsim  (s\log(s)^2\log(\mu m)\log(\mu) + s\sigma \log(n)) \cdot \delta_0^{-2} \cdot \max(1,\log(\epsilon^{-1}))\,.
\label{eq:mubound}
\end{align}
Then, {the map $\calC$ defined in \eqref{eq:calCDef}} has
\begin{align*}
    \delta_{(s,\sigma)}(\calC) \leq (1+\delta_\sigma(A))^2\delta_0 + \delta_\sigma(A) 
\end{align*}
with a probability at least $1-\epsilon$.
\end{theorem}

In the special case of $Q$ being Gaussian, {$A=\id$, $m=n$}, the bound on the number of measurements grows as $s\log(s)^2\log(\mu)\log(\mu n)+ s\sigma \log(n)$. As was discussed in Section \ref{sec:HiHTP}, this sample complexity is essentially as good as the best known in the setting that we consider.  

We postpone the technical proof to the next section.  {Let us instead} {conclude this section by extending our approach to a combined deconvolution and demixing problem. }

\subsection{Deconvolution and demixing.}
{An attractive feature of the hierarchically sparse approach to the sparse blind deconvolution problem is its generalizability to more complicated problems, where the measurements arise from (potentially multiple) convolutions and further mixing (weighted superpositioning) of the signal. 
To illustrate this point, we consider the combined \emph{blind sparse deconvolution and demixing problem}: Recover $N$ filter-signal pairs $(h^p,b^p) \in \K^\mu\times \K^n$ from $M$ weighted mixtures of the convolutions $h^p*(Qb^p)$, 
\begin{align}
    y_q = \sum_{p \in [N]} d_{q,p} h^p *(Q_pb^p), q \in [M]\,, 
    \label{eq:demixing}
\end{align}
with mixing weights $d_{q,p} \in \K$.}

\paragraph{Motivation}
This problem is directly motivated by extending the previous  communication model to a multi-user scenario: {We imagine $N$ Alices simultaneously transmitting messages $b^p$, $p\in [N]$ over channels with impulse responses $h^p$. They also modulate their transmission, meaning that in timeslot $q$, Alice $p$  sends a message $d_{q,p}b^p$ rather than $b^p$. If the Alices do so over $M$ slots, Bob will receive a signal of exactly the form \eqref{eq:demixing}.}

{Another interesting, but a slightly less direct, route to motivate the model is a MIMO (Multiple-Input-Multiple-Output) scenario. In this setting, Bob has $M$ antennas arranged in an array. The collective response of the antennas of a single incoming wavefront $v(\ell)\in \K$ from a direction {at angle} $\theta$ is  given by a vector $v(\ell) d^\theta \in \K^M$. Hence, if the scattered transmitted signals $x$ from a user arrives with delays $k$ from directions $\theta_k$, the collective response of the antennas is }
 \begin{align*}
     y_q(\ell)  = \sum_{k \in [\mu]} h(k) x(\ell-k) d^{\theta_k}_q\, .
 \end{align*}
If several users simultaneously send signals $x^p$ over {channels} with {impulse responses} $h^p$, the antenna will measure a superposition of such measurements, 
 \begin{align*}
     y_q(\ell) = \sum_{p \in [N]} \sum_{k \in [\mu]}  h(k) x(\ell-k) d^{\theta_k^p}_q\, .
 \end{align*}
 {Let us assume that the angle for each transmitting user to Bob stays constant for a transmission. 
 Then, the wavefronts of one user are arriving from the same angle, and we have } 
 \begin{align*}
      y_q(\ell) = \sum_{p \in [N]} \sum_{k \in [\mu]} h(k) x(\ell-k) d^{\theta^p}_q = \sum_{p \in [N]} d^{\theta^p}_q (h^p * x^p)(\ell)\, .
 \end{align*}
With $x^p =Q_pb^p$ and $d_{q,p}=d(\theta^p)_q$, we arrive again at the form of \eqref{eq:demixing}.

\paragraph{{Multi-level hierarchical sparsity}} 

The hierachical sparsity model is very simple to extend to more than two levels.
\begin{definition}
    Recursively, we define a $K$-tensor $w \in \K^{n_0} \otimes \K^{n_1} \otimes \dots \otimes \K^{n_{k-1}}$ to be $(s_0, \dots s_{k-1})$-sparse if it can be written as
    \begin{align*}
        \sum_{\ell \in [n_0]} e_\ell \otimes w_\ell
    \end{align*}
    for $(s_1, \dots, s_{k-1})$-sparse tensors $w_\ell \in\K^{n_1} \otimes \dots \otimes \K^{n_{k-1}}$, out of which at most $s_0$ are non-zero.
\end{definition}
The HiHTP algorithm naturally extends to the multilevel setting, with a similar HiRIP-based recovery guarantee. We refer to \cite{eisert2021hierarchical} for details.

The deconvolution and demixing problem can now easily be formulated as a hierarchically structured recovery problem. 
We view the collection $ (h^p \otimes b^p)_{p \in [N]} \in (\K^\mu \otimes \K^n)^N $ as a three-tensor
\begin{align*}
    X = \sum_{p \in [N]}e_p \otimes h^p \otimes b^p \in \K^N \otimes \K^\mu \otimes \K^n\, .
\end{align*}
If we assume that each message $b^p$ is $\sigma$-sparse, each filter is $s$-sparse, and that only $S$ filter-message tensors $h^p \otimes b^p$ are non-zero, the ground truth signal $X$ is $(S,s, \sigma)$-sparse.  {In the above communication model, such an assumption corresponds to assuming \emph{sporadic user activity} with only a subset of possible users actively sending at a given instance in time.  
Such sporadic traffic is well-motivated in today's and future machine-type communications.}

The blind deconvolution and demixing model can hence be treated as a hierarchical sparse recovery problem. As such, we may use the HiHTP to recover it. 
Defining $d^p = (d_{q,p})_{q \in [M]} \in \K^M$, and $\calC_p$ as the lifted convolutions associated to filter-message pair $p$, the collective measurement operator takes the form
$\calM : \K^N \otimes \K^\mu \otimes \K^n \to \K^M \otimes \K^\mu$, 
\begin{align}\label{eq:combinedCalMDef}    
    \sum_{p \in [N]}e_p \otimes w_p \mapsto \sum_{p \in [N]} d^p \otimes \calC_p(w_p)\, .
\end{align}
These type of operators {for arbitrary operators $\calC_p$ having HiRIP} were analyzed by the authors of the article in \cite{flinth2021hierarchical}. 
{The results \cite[Theorem~2.1]{flinth2021hierarchical} state that such a \emph{hierarchical} measurement operator inherits a HiRIP from the RIP of its constituent operators and it holds that}
\begin{align*}
    \delta_{(S,s,\sigma)}(\calM) \leq \delta_S(D) + \sup_p \delta_{(s,\sigma)}(\calC_p) +  \delta_S(D)  \cdot \sup_p \delta_{(s,\sigma)}(\calC_p),
\end{align*}
where $D=[d_p]_{p\in [N]} \in \K^{M,N}$.
Hence, if $\delta_S(D)$ and each $\delta_{s,\sigma}(\calC_p)$ is small, the entire measurement $\calM$ will have an $(S,s,\sigma)$-HiRIP. Let us formulate this and {its immediate consequences} as a corollary.
\begin{cor}[{Blind Deconvolution and demixing guarantee}]
(i) Assume that $D\in \K^{M,N}$ has $\delta_S(D)<1$ and $\calC_p$ fulfils $\sup_{p \in [N]} \delta_{(s,\sigma)}(\calC_p) =: \delta_\calC<1$. 
Then, the $\calM$ defined in \eqref{eq:combinedCalMDef} obeys
\begin{align*}
    \delta_{(S,s,\sigma)}(\calM)\leq \delta_S(D) + \delta_\calC +  \delta_S(D)  \cdot \delta_\calC .
\end{align*}

(ii) Assume that the $\calC_p$ are equal to one operator constructed according to Assumption \ref{ass:measurement}, and $D \in \K^{M,N}$ is Gaussian. Suppose
\begin{align*}
    M &\geqsim S\log(N) \\
    \mu &\geqsim s\log(s)^2\log(\mu m)\log(\mu) + s\sigma \log(n),
\end{align*}
then with high probability, HiHTP will recover each $(S,s,\sigma)$-sparse ground truth $X$ from the measurements $\calM(X)$.

\end{cor}
\begin{remark}
The assumption of identical $\calC_p$  is not essential -- it merely simplifies the formulation of the result. Using a union bound together with Theorem \ref{th:main_result}, {we can also address the case where the $\calC_p$ are independently chosen and establish a recovery guarantee for}
\begin{align*}
    \mu \geqsim (s\log(s)^2\log(\mu m)\log(\mu) + s\sigma \log(n))\log(N)\, .
\end{align*}
{The additional $\log(N)$ factor comes from applying a union bound over the $N$ operators $\calC_p$.}  Although this bound does not suggest it, letting the constituent operators $\calC_p$ be different, and in particular uncorrelated, may in the general hierarchical measurement case assist the recovery. We refer to \cite{flinth2021hierarchical} for a detailed discussion on this matter.
\end{remark}

\section{Proof of Theorem \ref{th:main_result}} \label{sec:proof_main_result}
{The proof of Theorem \ref{th:main_result} proceeds in five steps. First, we rewrite the problem, partly factoring out the operator $A$. Then, we follow a route taken many times in the literature before, for instance in \cite{KrahmerSuprema,rudelson2008sparse} and in particular presented in the textbook \cite{FouRau2013}: We identify the HiRIP constant as the supremum a random process as in \cite{KrahmerSuprema}. Using the results of said paper, we reduce the problem to the estimation of so-called \emph{$\gamma_\alpha$-functionals} of a set of matrices, which in turn boils down to the estimation of covering numbers. The third step consists in providing a simple estimate of the latter, and the fourth in applying \emph{Maurey's empirical method} to derive a {tighter} bound {for a certain regime}. {Both bounds} are finally assembled to complete the proof and obtain the sample complexity {in a fifth step}. Although the proof follows well-known beats, the particular geometry of the set of $(s,\sigma)$-sparse vectors and the structure of our measurement model requires careful and non-trivial modifications of the arguments.}

\subsection{ Preliminaries}
 We start out small and provide an explicit form of the lifted convolution operator. To do that, we need some further notation.
\begin{definition}
For $\ell \in [\mu]$, let $S_\ell:\K^\mu \to \K^\mu$ be the shift operator
        \begin{align*}
            [S_\ell x]_k = x_{k+\ell}\, .
        \end{align*}
 We define $\calH : \K^\mu  \otimes  \K^n \to \K^\mu$ through
        \begin{align*}
            w= \sum_{k \in [\mu]} e_k \otimes w_k \ \mapsto\ \calH(w) = \sum_{k \in [\mu]}S_k  Qw_k\, .
        \end{align*}
\end{definition}
The claim is now that \emph{up to reflections}, the lifted version of the convolution is given by $\calH$.
\begin{lem} \label{lem:lifting}
    Define the reflection operator $R: \R^\mu \to \R^\mu$  through
    \begin{align*}
        [Rx]_k = x_{-k},
    \end{align*}
    and extend it to $\R^\mu \otimes \R^n$ through linear extension of $R(h \otimes b) = R(h) \otimes b$. Then, for $h \in \R^\mu$ and $b \in \R^n$,
    \begin{align*}
        \calC = \calH \circ R
    \end{align*}

\end{lem}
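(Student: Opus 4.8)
The plan is to verify the identity on rank-one tensors and then invoke linearity. Both $\calC$ and $\calH \circ R$ are linear on $\K^\mu \otimes \K^n$ (they are compositions and linear extensions of linear maps), so it suffices to check that $\calC(h \otimes b) = \calH(R(h \otimes b))$ for arbitrary $h \in \K^\mu$ and $b \in \K^n$.

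First I would unfold the right-hand side. By the definition of $R$ and its linear extension, $R(h \otimes b) = R(h) \otimes b = \sum_{k \in [\mu]} e_k \otimes h_{-k}\, b$. Applying $\calH$ and using linearity of $Q$ and of each $S_k$ then gives $\calH(R(h\otimes b)) = \sum_{k \in [\mu]} S_k\, Q(h_{-k}\, b) = \sum_{k \in [\mu]} h_{-k}\, S_k(Qb)$.

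Next I would evaluate this vector entrywise. For $\ell \in [\mu]$, using $[S_k x]_\ell = x_{\ell+k}$, we get $[\calH(R(h\otimes b))]_\ell = \sum_{k \in [\mu]} h_{-k}\, (Qb)_{\ell + k}$. Substituting $j = -k$, which is a bijection of $[\mu]$ onto itself under the cyclic identification, turns this into $\sum_{j \in [\mu]} h_j\, (Qb)_{\ell - j} = [h * (Qb)]_\ell$, the last equality being the definition of the circular convolution. Since this holds for every $\ell$, we conclude $\calH(R(h\otimes b)) = h*(Qb) = \calC(h\otimes b)$, and linearity finishes the argument.

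There is no real obstacle here; the only thing requiring care is the index bookkeeping — tracking the sign introduced by the reflection and using that, modulo $\mu$, the maps $k \mapsto -k$ and $k \mapsto k + \ell$ are bijections of the index set. This is exactly why $R$ appears: the naive lift $\calH$ sums the forward shifts $S_k Q w_k$, whereas the convolution pairs the coefficient $h_k$ with a shift by $-k$, and $R$ reconciles the two conventions.
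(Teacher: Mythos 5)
Your proof is correct and follows essentially the same route as the paper's: reduce to rank-one tensors by linearity, expand $R(h\otimes b)$ in the standard basis, and verify the identity entrywise via a cyclic reindexing (you substitute $j=-k$ where the paper uses $j=k+\ell$, which is an immaterial difference given the symmetry of the convolution). No gaps.
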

\begin{proof}
  It suffices to show the claimed equality on elements $h \otimes b$. Hence, it suffices to show that $[h*(Qb)]_{\ell} = \calH(Rh \otimes b)_\ell$ for each $\ell \in [\mu]$. Note that $Rh \otimes b = \sum_{k \in [\mu]} (Rh)_k e_k \otimes b =  \sum_{k \in [\mu]} h_{-k} e_k \otimes b$. Hence, for each $\ell\in [\mu]$ we have 
   \begin{align*}
        \calH(Rh \otimes b)_\ell &= \sum_{k \in [\mu]} [S_k(h_{-k} Q b)]_\ell =  \sum_{k \in [\mu]} h_{-k}[Qb]_{k + \ell} 
        = \sum_{j \in [\mu]} h_{\ell -j}[Qb]_j  = [h*(Qb)]_{\ell},
    \end{align*}
    which was what to be shown.
\end{proof}

Since the reflection is an isometry (on $\K^\mu$ as well as $\K^\mu \otimes \K^n$) and leaves the hierarchical sparsity structure intact, we can concentrate on proving the HiRIP for the $\calH$-operator.  Let us introduce further notational simplification.
\begin{definition}
  For $w = \sum_{k \in [N]} e_k \otimes w_k \in {\K^{\mu}\otimes \K^n}$ and $A \in \K^{m,n}$, we let $Aw$ denote the result of a block-wise application of $A$ to $w$, i.e.
    \begin{align*}
        Aw := \sum_{k \in [n]} e_k \otimes (Aw_k).
    \end{align*}
    
\end{definition}
\begin{definition} For $s,\sigma$, we define $T_{s,\sigma}$ be the set of $(s,\sigma)$-sparse vectors in $\K^\mu \otimes \K^n$ with norm less than or equal to $1$, and $T_\sigma \sse \K^n$ similarly. We furthermore define $AT_{s,\sigma}$ as the image of $T_{s,\sigma}$ under $A$, i.e.
   \begin{align*}
       AT_{s,\sigma} = \set{ \Sigma_{k \in [\mu]} e_k \otimes (Aw_k) \, \vert \, w = \Sigma_{k \in [\mu]} e_k \otimes w_k \in T_{s,\sigma}}\,.
   \end{align*}
\end{definition}
We now prove a slightly less trivial lemma than the ones above. It shows that we may partially factor out the matrix $A$ from the analysis. 
\begin{lem} \label{lem:factorization}
{For $Q=UA$, let $\widehat{\calH}: \K^\mu \otimes \K^m \to \K^\mu$ denote the operator
\begin{align*}
    v = \sum_{k\in [\mu]}e_k \otimes v_k\  \mapsto\  \widehat{\calH}(v) = \sum_{k\in [\mu]}S_k U v_k\,. 
\end{align*}}
Let further
\begin{align} \label{eq:redRIP}
    \Delta_{s,\sigma}(\widehat{\calH}) := \sup_{w \in AT_{s,\sigma}} | \|\hat {\mathcal H} (w) \|^2 -\|w \|^2|,
\end{align}

We then have
\begin{align*}
     \delta_{s,\sigma}(\calH)\leq\Delta_{s,\sigma}(\widehat{\calH})+\delta_{\sigma}(A)+ \Delta_{s,\sigma}(\widehat{\calH})\delta_{\sigma}(A)\,.
\end{align*}
\end{lem}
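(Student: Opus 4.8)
The plan is to exploit the fact that the convolution operator $\calH$ acts on the block $w_k \in \K^n$ only through the product $Q w_k = U(Aw_k)$. So, writing $v_k = Aw_k$, we have $\calH(w) = \widehat{\calH}(Aw)$ where $Aw = \sum_k e_k \otimes (Aw_k)$. Thus for an $(s,\sigma)$-sparse unit vector $w \in T_{s,\sigma}$,
\begin{align*}
    \bigl| \|\calH(w)\|^2 - \|w\|^2 \bigr|
    &= \bigl| \|\widehat{\calH}(Aw)\|^2 - \|w\|^2 \bigr| \\
    &\leq \bigl| \|\widehat{\calH}(Aw)\|^2 - \|Aw\|^2 \bigr| + \bigl| \|Aw\|^2 - \|w\|^2 \bigr|.
\end{align*}
The first term on the right is bounded by $\Delta_{s,\sigma}(\widehat{\calH})$, \emph{provided} we normalize correctly — note that $Aw$ need not have unit norm, so I first estimate $\|Aw\|$ via the $\sigma$-RIP of $A$ applied block-wise.

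The key observation for the second term is that the $\sigma$-RIP of $A$ lifts to a block-wise estimate: if $w = \sum_k e_k \otimes w_k$ with each $w_k$ being $\sigma$-sparse, then $\|Aw\|^2 = \sum_k \|Aw_k\|^2$ and $\|w\|^2 = \sum_k \|w_k\|^2$, so
\begin{align*}
    \bigl| \|Aw\|^2 - \|w\|^2 \bigr| \leq \sum_k \bigl| \|Aw_k\|^2 - \|w_k\|^2 \bigr| \leq \delta_\sigma(A) \sum_k \|w_k\|^2 = \delta_\sigma(A) \|w\|^2 = \delta_\sigma(A).
\end{align*}
In particular, $\|Aw\|^2 \leq 1 + \delta_\sigma(A)$, so that $w' := Aw / \|Aw\|$ lies in $AT_{s,\sigma}$ after renormalization — or, more directly, since $AT_{s,\sigma}$ as defined is the image of $T_{s,\sigma}$ under the (un-normalized) block map $A$, the element $Aw$ itself lies in $AT_{s,\sigma}$, so $\bigl| \|\widehat{\calH}(Aw)\|^2 - \|Aw\|^2 \bigr| \leq \Delta_{s,\sigma}(\widehat{\calH})$ directly, without any renormalization. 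Wait — I should double-check: the supremum in \eqref{eq:redRIP} is over $w \in AT_{s,\sigma}$ of $\bigl| \|\widehat{\calH}(w)\|^2 - \|w\|^2 \bigr|$, and $Aw \in AT_{s,\sigma}$, so indeed $\bigl| \|\widehat{\calH}(Aw)\|^2 - \|Aw\|^2 \bigr| \leq \Delta_{s,\sigma}(\widehat{\calH})$ with no further work.

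Combining, for every $w \in T_{s,\sigma}$,
\begin{align*}
    \bigl| \|\calH(w)\|^2 - \|w\|^2 \bigr| \leq \Delta_{s,\sigma}(\widehat{\calH}) + \delta_\sigma(A).
\end{align*}
Taking the supremum over $w \in T_{s,\sigma}$ gives $\delta_{s,\sigma}(\calH) \leq \Delta_{s,\sigma}(\widehat{\calH}) + \delta_\sigma(A)$, which is actually \emph{stronger} than the claimed bound (the stated inequality has an extra nonnegative cross term $\Delta_{s,\sigma}(\widehat{\calH})\delta_\sigma(A)$, so the cleaner bound implies it). The cross term would naturally appear if one instead renormalizes $Aw$ to unit norm before applying $\Delta_{s,\sigma}$: writing $\|\widehat{\calH}(Aw)\|^2 \leq (1+\Delta_{s,\sigma})\|Aw\|^2 \leq (1+\Delta_{s,\sigma})(1+\delta_\sigma(A))$ and similarly from below yields exactly $\delta_{s,\sigma}(\calH) \leq (1+\Delta_{s,\sigma})(1+\delta_\sigma(A)) - 1 = \Delta_{s,\sigma}(\widehat{\calH}) + \delta_\sigma(A) + \Delta_{s,\sigma}(\widehat{\calH})\delta_\sigma(A)$. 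Either route works; I would present whichever is notationally cleanest.

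The only genuinely non-routine point — and the thing to be careful about — is making sure that $AT_{s,\sigma}$ really does contain $Aw$ for $w \in T_{s,\sigma}$ and that the block-wise RIP estimate is legitimate, i.e. that the supports involved really are of size $\sigma$ per block and that cross-block orthogonality of the standard basis lets one sum the per-block RIP bounds. There is no hard obstacle here; this is a bookkeeping lemma whose purpose is to reduce the RIP analysis of $\calH$ (which involves the full dictionary $Q$) to that of $\widehat{\calH}$ (which involves only the random embedding $U$), deferring all the probabilistic work to the next step.
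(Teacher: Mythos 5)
Your proof is correct and follows the same route as the paper: factor $\calH(w)=\widehat{\calH}(Aw)$, split with the triangle inequality, and control the second term by summing the $\sigma$-RIP of $A$ over the orthogonal blocks. Your observation about the cross term is also accurate --- with $\Delta_{s,\sigma}$ taken literally as a supremum over the (non-renormalized) image $AT_{s,\sigma}$, the cleaner bound $\Delta_{s,\sigma}(\widehat{\calH})+\delta_\sigma(A)$ already holds, and the extra term $\Delta_{s,\sigma}(\widehat{\calH})\,\delta_\sigma(A)$ in the statement only arises from the renormalized reading of $\Delta_{s,\sigma}$ that the paper implicitly adopts when it bounds the first term by $\Delta_{s,\sigma}(\widehat{\calH})\,\norm{Aw}^2$.
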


\begin{proof}
   Let $w \in \K^\mu \otimes \K^n$ be a normalized $(s,\sigma)$-sparse vector. Then, $Aw$ clearly is an element of $AT_{s,\sigma}$. Moreover, $\calH(w)=\widehat{\calH}(Aw)$. Hence,
   \begin{align*} \abs{\norm{\calH(w)}^2-\norm{Aw}^2} = \abs{\norm{\widehat{\calH}(Aw)}^2-\norm{Aw}^2} \leq \Delta_{s,\sigma}(\widehat{\calH})\norm{Aw}^2.
   \end{align*}
   Now, since each $w_k$ is $\sigma$-sparse
   \begin{align*}
     \abs{\norm{Aw}^2-\norm{w}^2} \leq \sum_{k\in [\mu]}\abs{\norm{Aw_k}^2- \norm{w_k}^2} \leq \delta_{\sigma}(A) \sum_{k\in [\mu]} \norm{w_k}^2 =\delta_{\sigma}(A) \norm{w}^2.
   \end{align*}
   Combining these two inequalities yields
   \begin{align*}
     \abs{\norm{\calH(w)}^2-\norm{w}^2} &\leq \abs{\norm{\calH(w)}^2-\norm{Aw}^2} +   \abs{\norm{Aw}^2-\norm{w}^2} \\
     &\leq \Delta_{s,\sigma}(\widehat{\calH})\norm{Aw}^2 + \delta_\sigma(A) \norm{w}^2 \leq \left(\Delta_{s,\sigma}(\widehat{\calH})(1+ \delta_\sigma(A)) +  \delta_\sigma(A)\right) \norm{w}^2,
   \end{align*}
   which is the claim.
\end{proof}
\begin{remark} {A careful inspection of the proof shows that we can let $\sigma$ be a multiindex {referring to a hierarchically sparse structure itself}. 
Then, it suffices that the compression matrix $A$ has a $\sigma$-\emph{Hi}RIP instead (where $\sigma$ is understood as a tuple of sparsities) to arrive at the analogous result. }
It is hence only a notational effort to extend the above results to the case where the signal $b$ is not only sparse, but hierarchically sparse. In the same manner, what follows should also be generalizable to the multilevel setting. We have chosen to stay in the bilevel regime mainly to increase readability. \label{rem:hisparse}
\end{remark}

\subsection{Suprema of empirical chaos processes}

The main tool of our argument will be a result from \cite{KrahmerSuprema}. It uses the so-called Talagrand's $\gamma_\alpha$-functionals. 
Since the definition of the functional is rather technical and not particularly insightful, we refer the interested reader to \cite[Definition ~1.2.5]{genericChaining}. 
For our purposes, we only need to know that for a set $T$ and a metric $d$, 
\begin{align*}
    \gamma_\alpha(M,d) \leqsim \int_0^\infty \log(\calN(T,d,u))^{\sfrac{1}{\alpha}} \, \dd u,
\end{align*}
where $\calN(T,d,u)$ denotes the covering numbers of the set $T$ \cite{genericChaining,dirksen2015tail} \cite[App.~A]{RauhutCirculant}, and $\alpha$ is a positive number. The aforementioned result is as follows.
\begin{theorem}(\cite[Theorem 3.1, see also Remark 3.4]{KrahmerSuprema}) \label{th:krahmerbound} Let $\gamma$ be a random vector as in Assumption \ref{ass:measurement} and $\calT$ a symmetric set of matrices in $\C^{\mu,\mu\cdot n}$. Define
\begin{align*}
d_{2\to 2} = \sup_{T \in \calT} \norm{T}_{2\to 2}, \quad d_{F} = \sup_{T \in \calT} \norm{T}_{F}
\end{align*}
and set
    \begin{align*}
        E &= \gamma_2(\calT, \norm{\cdot}_{2\to2})^2 + d_F \cdot \gamma_2(\calT,\norm{\cdot }_{2\to2}), \quad   U = d_{2\to2}(\calT)^2 \text{ and } \\
        V&= d_{2\to 2}(\calT) \left( \gamma_2(\calT, \norm{\cdot }_{2\to2}) + d_F(\calT)\right) 
    \end{align*}
Then
\begin{align*}
    \prb{\sup_{T \in \calT} \abs{\norm{T\gamma}^2 - \erw{\norm{T\gamma}^2}} > CE + t} \leq 2\exp\left(- \min\left\{\tfrac{t^2}{V^2},\tfrac{t}{U}\right\}\right).
\end{align*}
\end{theorem}

We now, as in \cite{KrahmerSuprema}, rewrite our measurement $\calH(w)$ as a product of a matrix and a random vector.

\begin{lem} \label{ref:umschreibung}
Let 
\begin{align*}
    \calA : \C^\mu \otimes \C^n \to \C^{\mu,\mu \cdot n}, w \mapsto \left(\tfrac{1}{\mu^{\sfrac{1}{2}}} w_{j-k}(i)\right)_{k \in [\mu], (j,i)\in [\mu]\times [n]}.
\end{align*}
Then,
\begin{align*}
     \Delta_{s,\sigma}(\widehat{H}) = \sup_{T \in \calA(AT_{s,\sigma})}\abs{\norm{T\gamma}^2 - \erw{\norm{T\gamma}^2}},
\end{align*}
\end{lem}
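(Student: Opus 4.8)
The plan is to verify Lemma~\ref{ref:umschreibung} by a direct computation showing that for every $w = \sum_{k \in [\mu]} e_k \otimes w_k$, the quantity $\|\widehat{\calH}(Aw)\|^2$ equals $\|\calA(Aw)\gamma\|^2$ as a function of the random vector $\gamma = (\gamma_j(i))_{j \in [m], i \in [n]}$ (appropriately reshaped), and then checking that $\erw{\|\calA(v)\gamma\|^2} = \|v\|^2$ for every $v$. Combining these two facts with the definition \eqref{eq:redRIP} of $\Delta_{s,\sigma}(\widehat{\calH})$ gives the claimed identity, since as $w$ ranges over $T_{s,\sigma}$, the vector $Aw$ ranges over $AT_{s,\sigma}$, and $\calA(AT_{s,\sigma})$ is exactly the index set of matrices appearing on the right-hand side.

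The concrete steps: first, I would write out $\widehat{\calH}(v) = \sum_{k \in [\mu]} S_k U v_k$ componentwise. Using $[S_k U v_k]_\ell = [Uv_k]_{\ell+k} = \sum_{j \in [m]} U_{\ell+k, j} v_k(j) = \sum_{j} \gamma_j(\ell+k) v_k(j)$ and recalling $v_k \in \K^m$ here — wait, one must be careful: in $\widehat{\calH}$ the blocks lie in $\K^m$, whereas $\calA$ is written with blocks in $\K^n$; the point is that we apply $\calA$ to $Aw$, whose blocks $Aw_k$ lie in $\K^m$, so I would restate $\calA$ with the second index running over $[\mu]\times[m]$ and $\gamma \in \C^{\mu m}$ accordingly (this is the mild notational slippage between the statement and Assumption~\ref{ass:measurement}; I would flag it and fix indices consistently). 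Then $[\widehat{\calH}(v)]_\ell = \sum_{k,j} \gamma_j(\ell+k)\, v_k(j)$, and after the substitution $j' = \ell + k$ this is visibly a bilinear pairing of the fixed tensor $v$ with the random vector $\gamma$; collecting the coefficient of $\gamma_{j}(i)$ into an entry of a matrix $T = \calA(v)$ with the stated $\mu^{-1/2}$ (or $m^{-1/2}$) normalization and index pattern $v_{j-k}(i)$ gives $\widehat{\calH}(v) = T\gamma$ up to the reindexing, hence $\|\widehat{\calH}(v)\|^2 = \|T\gamma\|^2$.

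Second, I would compute the expectation: by independence, centeredness and normalization of the entries $\gamma^{-1/2}\gamma_j(i)$ in Assumption~\ref{ass:measurement}, $\erw{\gamma_j(i)\overline{\gamma_{j'}(i')}} = \delta_{jj'}\delta_{ii'}$ (with the variance normalization absorbed into the $\mu^{-1/2}$ scaling), so $\erw{\|T\gamma\|^2} = \|T\|_F^2 = \|v\|^2$, where the last equality is just the observation that the Frobenius norm of $\calA(v)$ collects each coordinate $v_k(i)$ exactly once (the circular shift $j \mapsto j-k$ is a bijection on $[\mu]$). Substituting both computations into \eqref{eq:redRIP}, and using $\|Aw\|^2 = \|\calA(Aw)\gamma\|_{\text{expectation}}$ — more precisely $\|w\|_{AT_{s,\sigma}}$-point equals $\erw{\|T\gamma\|^2}$ — yields $\Delta_{s,\sigma}(\widehat{\calH}) = \sup_{T \in \calA(AT_{s,\sigma})} |\,\|T\gamma\|^2 - \erw{\|T\gamma\|^2}\,|$, which is the claim.

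I do not expect a genuine obstacle here; this lemma is the bookkeeping step that casts the problem into the form required by Theorem~\ref{th:krahmerbound}. The only thing demanding care is keeping the two sets of block-indices straight — the blocks of $w$ live in $\K^n$ but after applying $A$ they live in $\K^m$, and $\calA$ as literally printed needs its second block-index changed from $[n]$ to $[m]$ for the composition $\calA \circ A$ to typecheck — together with confirming that the normalization constant in $\calA$ is chosen so that $\erw{\|T\gamma\|^2} = \|v\|^2$ exactly (no stray factor of $\mu$ or $m$), which is what makes the centered process $\|T\gamma\|^2 - \erw{\|T\gamma\|^2}$ the right object. The genuinely hard estimates — bounding the $\gamma_2$-functionals and the diameters $d_{2\to2}$, $d_F$ of the matrix set $\calA(AT_{s,\sigma})$ — are deferred to the subsequent steps of the proof.
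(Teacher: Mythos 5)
Your proposal is correct and follows essentially the same route as the paper's proof: write out $\widehat{\calH}(v)$ componentwise, reindex via $j=\ell+k$ to identify it with $\calA(v)\gamma$, and use the independence and normalization of the entries of $\gamma$ to get $\erw{\norm{T\gamma}^2}=\norm{T}_F^2=\norm{v}^2$, which matches the $\norm{w}^2$ term in \eqref{eq:redRIP}. Your observation that the second block index of $\calA$ should run over $[m]$ rather than $[n]$ (since the blocks of $Aw$ live in $\K^m$) correctly identifies a notational slippage that the paper's own proof also glosses over.
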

\begin{proof}
We have for each $k \in [\mu]$
\begin{align*}
    (\widehat{\calH}w)(k) &= \sum_{\ell \in [\mu]} (S_\ell Uw_\ell)(k) =  \sum_{\ell\in [\mu], i \in [n]} (S_\ell u_i)(k) w_\ell(i) = \sum_{\ell\in [\mu], i \in [n]}  u_i(\ell+k) w_\ell(i)  \\
    &= \lceil j = \ell + k \rceil = \tfrac{1}{\sqrt{\mu}}\sum_{j \in [\mu], i \in [n]} w_{j-k}(i) \gamma_i(j) = (\calA(w)\gamma)(k),
\end{align*}
i.e. $\widehat{\calH}(w)=\calA(w)\gamma$. Now we calculate
\begin{align*}
    \erw{\norm{\calA(w)\gamma}^2} &= \erw{\mathrm{tr}(\gamma^*\calA(w)^*\calA(w) \gamma)} = \erw{\mathrm{tr}(\calA(w)^*\calA(w) \gamma\gamma^*)} \\
    &= \mathrm{tr}(\calA(w)^*\calA(w)) = \norm{\calA(w)}_F^2,
\end{align*}
where we in the penultimate step used that $\erw{\overline{\gamma_i}\gamma_j}=\delta_{ij}$. Now it is only left to note that
\begin{align*}
    \norm{\calA(w)}_F^2 = \sum_{k,j \in [\mu], i \in [n]} \tfrac{1}{\mu} \abs{w_{j-k}(i)}^2 = \norm{w}^2.
\end{align*}
\end{proof}

To bound the constant $\Delta_{s,\sigma}(\calH)$, we now need to estimate the parameters $E,V$ and $U$ in Theorem \ref{th:krahmerbound}.  To start, we have the following Lemma.
\begin{lem} \label{lem:distances}
    Define the norm
    \begin{align*}
        \norm{v}_X = \left(\sum_{i \in [m]} \sup_{\theta \in [0,2\pi]} \abs{\sum_{k \in [\mu]}v_k(i)e^{\iota k \theta}}^2\right)^{\sfrac{1}{2}}.
    \end{align*}
    on $\K^{\mu} \otimes \K^m$. Then, for $u,w \in \K^\mu \otimes \K^m$
    \begin{align*}
        \norm{\calA(u-w)}_{2\to 2} \leq \tfrac{2}{\sqrt{\mu}}\norm{u-w}_X.
    \end{align*}
\end{lem}
\begin{proof}
    Let us for convenience write $v=u-w$. Let us notice that $\calA(v)$ is a matrix consisting of $m$ Toeplitz blocks
    \begin{align*}
        \calA(v) = [N^0, \dots, N^m],
    \end{align*}
    with $N^i = \tfrac{1}{\mu^{\sfrac{1}{2}}}  (w_{j-k}(i) )_{j,k \in [\mu]}$. The operator norm of a Toeplitz matrix $\mathcal{T}~=~(\tau(j~-~k))_{j,k \in [\mu]}$ is well-known to be upper bounded by (see for example  \cite{Bottcher1999})
    \begin{align*}
        \snorm{\mathcal{T}}\leq 2\sup_{\theta \in [0,2\pi]}\abs{\sum_{k \in [\mu]} \tau(k)e^{\iota k\theta}}.
    \end{align*}
    We may therefore estimate
    \begin{align*}
        \snorm{N^i} \leq 2\sup_{\theta \in [0,2\pi]}\abs{\tfrac{1}{\sqrt{\mu}}\sum_{k \in [\mu]} v_{k}(i)e^{\iota k\theta}}
    \end{align*}
    which together with the bound
    \begin{align*}
        \norm{\calA(v)\gamma} \leq \sum_{i \in [m]}\norm{N^i}\norm{\gamma_i} \leq  \left(\sum_{i \in [m]}\norm{N^i}^2\right)^{\sfrac{1}{2}} \norm{\gamma} 
    \end{align*}
    gives the claim.
\end{proof}

The above lemma shows that $\calN(\calA(AT_{s,\sigma}), \norm{\, \cdot \,}_{2\to 2},t)\leq \calN(AT_{s,\sigma},\tfrac{2}{\sqrt{\mu}}\norm{\,\cdot\,}_X,t)$, and similar for the diameters. The rest of the proof will therefore consist in estimating the quantities involving the $\norm{\, \cdot \,}_X$-norm. 

\subsection{A simple covering number bound} 
The estimation of the $\gamma_2$-functional boils down to bounding the covering numbers $\calN(AT_{s,\sigma},\norm{\, \cdot \,}_X, t)$. The following lemma gives us a first, simple, estimate of the latter.
\begin{lem} \label{lem:smallt}
    Let $u \in AT_{s,\sigma}$ have a representation $\check{u} \in T_{s,\sigma}$, i.e., $u = A\check{u}$. Then
    \begin{align*}
        \norm{u}_X \leq \sqrt{s} (1+\delta_\sigma(A))^{\sfrac{1}{2}}\norm{\check{u}}.
    \end{align*}
    In particular,
    \begin{align*}
        \log(\calN(\calA(AT_{s,\sigma}), \norm{\, \cdot \,}_{2\to 2},t))\leq s\log(\mu) + s\sigma \log(n) + 2s\sigma \log(1+\tfrac{4\sqrt{s}(1+\delta_{\sigma}(A))^{\sfrac{1}{2}}}{t\sqrt{\mu}}) \\
    \end{align*}
\end{lem}

\begin{proof}
    For each $\theta \in [0,2\pi]$ and $i\in [m]$, we have
    \begin{align*}
        \abs{\sum_{k \in [\mu]}u_k(i) e^{\iota k \theta}}
        \leq \sum_{k\in [\mu]}|u_k(i)|
        \leq \sqrt{s} \|u\|, 
    \end{align*}
    where we have used reverse norm-ordering $\norm{v}_1  \leq \sqrt{s}\norm{v}_2$ for an $s$-sparse  vector for the $(u_k(i))_{k\in\mu}$ with fixed $i$.
    To prove the first claim, it is now only left to note that since each $\check{u}_k$ is $\sigma$-sparse and $A$ has the $\sigma$-RIP, we have
    \begin{align*}
        \norm{u}^2 = \sum_{k \in [\mu]} \norm{A\check{u}_k}^2 \leq \sum_{k \in [\mu]} \left(1+\delta_\sigma(A)\right)\norm{\check{u}_k}^2 = (1+\delta_\sigma(A)) \norm{\check{u}}^2.
    \end{align*}
    The above estimate allows us to bound the covering number as follows. Let us set $\kappa = 2\sqrt{\tfrac{s}{\mu}}(1+\delta_\sigma(A))^{\sfrac{1}{2}}$. For each $(s,\sigma)$-sparse support $\Omega$, let $U_\Omega$ denote intersection of the unit ball with the signals supported on $\Omega$. Since the latter spaces in both the real and complex case have a dimension smaller than $2s\sigma$, it is well known that (see for example \cite[Appendix 2]{FouRau2013}) we have for each $t>0$
    \begin{align*}
        \calN(U_\Omega,\kappa \cdot \norm{\cdot}, t) \leq \calN(U_\Omega,\norm{\cdot}, \tfrac{t}{\kappa})\leq \left(1+\tfrac{2\kappa}{t}\right)^{2s\sigma}.
    \end{align*}
    The above means that there exists a $t$-net of $U_\Omega$ w.r.t $\kappa \norm{\cdot}$ of cardinality less than $\left(1+\tfrac{2\kappa}{t}\right)^{2s\sigma}$. By the inequality we just proved, and Lemma \ref{lem:distances}, such a net is however also a $t$-net of $\calA(U_\Omega)$ with respect to $\norm{\cdot}_{2\to2}$. Hence, since there are $\binom{\mu}{s}\binom{n}{\sigma}^s$ $(s,\sigma)$-sparse supports, we conclude
    \begin{align*}
        \calN(\calA(AT_{s,\sigma}),  \norm{\, \cdot \,}_{2\to 2}, t) \leq \binom{\mu}{s}\binom{n}{\sigma}^s\left(1+\tfrac{2\kappa}{t}\right)^{2s\sigma},
    \end{align*}
    which yields the second claim.
\end{proof}

The above estimate of the covering number is a little bit too crude for large values of $t$. To provide a better one, we will have to use more involved tools.

\subsection{The empirical method of Maurey}
Now we will use Maurey's empirical method to provide a stronger bound of the covering numbers $\calN(AT_{s,\sigma},\norm{\, \cdot \,}_X, t)$. The idea of this method is to, given $v \in AT_{s,\sigma}$ define random variables $\underline{v}$ that only take on finitely many, say $N$, values, and subsequently show that $\erw{\underline{v}-v}\leq t$. This shows in particular that one of the $N$ values $\underline{v}$ can attain is closer to $v$ than $t$ and, thus, $\calN(AT_{s,\sigma},\norm{\,\cdot \,}_X, t)\leq N$.

The first step towards defining the random variable $\underline{v}$ is to notice that $T_{s,\sigma}$ is contained in a scaled version of the $\ell_{1,2}$-unit ball $B_{1,2}$. Recall that  the $\ell_{1,2}$-norm of an element $\check{v} =\sum_{k \in [\mu]} e_k \otimes \check{v}_k \in \K^{\mu} \otimes \K^n$ is given by $\norm{v}_{1,2} =\sum_{k \in [\mu]} \norm{v_k}$. Now, if $\check{v} \in T_{s,\sigma}$, the vector $\nu = (\norm{\check{v}_k})_{k\in [\mu]}$ is $s$-sparse. Hence, the reversed norm ordering $\norm{w}_1\leq \sqrt{s}\norm{w}_2$ for $s$-sparse $w$ yields
\begin{align*}
    \norm{\check{v}}_{1,2} = \sum_{k \in [\mu]}\norm{\check{v}_k} =\norm{\nu}_1 \leq \sqrt{s}\norm{\nu}_2 = \sqrt{s}\norm{\check{v}}.
\end{align*}
Hence, $\check{v} \in \sqrt{s}B_{1,2}$, where $B_{1,2}$ is the unit ball with respect to the $\norm{\, \cdot \,}_{1,2}$-norm. Clearly, $\check{v}$ is also in the set
\begin{align*}
    \Sigma_\sigma^\mu = \set{ \sum_{k \in \mu} e_k \otimes \check{v}_k \, \vert \, \check{v}_k \text{ $\sigma$-sparse.}}
\end{align*}
so that every $v\in A(T_{s,\sigma})$ must lie in $A(\sqrt{s} B_{1,2} \cap \Sigma_\sigma^\mu)$. This insight leads to the following statement.
\begin{lem} \label{lem:l12ball}
    \begin{align*}
        \calN(\tfrac{1}{\sqrt{s}}A(T_{s,\sigma}),\norm{\,\cdot\,}_X,t) \leq \calN(A(B_{1,2} \cap \Sigma_\sigma^\mu), \norm{\,\cdot \,}_X, \tfrac{t}{2}). 
    \end{align*}
\end{lem}
\begin{proof}
    By the above discussion, $\tfrac{1}{\sqrt{s}}A(T_{s,\sigma})\sse A(B_{1,2} \cap \Sigma_\sigma^\mu)$. It is only left to utilize a well-known property of covering numbers. For convenience, let us provide the argument here: By what we just argued, a $\frac{t}{2}$-net $\Theta$ for $A(B_{1,2} \cap \Sigma_\sigma^\mu)$ is also a set with the property that every element in $\tfrac{1}{\sqrt{s}}A(T_{s,\sigma})$ has an element in $\Theta$ at a distance at $\frac{t}{2}$ from it. Hence, it is almost a $\tfrac{t}{2}$-net , but not quite: $\Theta$ must not necessarily consist of elements of $\tfrac{1}{\sqrt{s}}A(T_{s,\sigma})$. But, it is clear that if we define $\Theta'$ as the subset of $\Theta$ that are at most a distance $\sfrac{t}{2}$ from $\tfrac1{\sqrt s} A(T_{s,\sigma})$, $\Theta'$ still covers $\tfrac1{\sqrt s} A(T_{s,\sigma})$. Now we can replace all elements of $\Theta'$ with a $t/2$-close point of $\tfrac1{\sqrt s} A(T_{s,\sigma})$ and, by triangle inequality, arrive at a $t$-net.
    The statement follows.
\end{proof}

Now let us construct the aforementioned random variables $\underline{v}$ which will help us estimate $\calN(A(B_{1,2} \cap \Sigma_\sigma^\mu), \norm{\,\cdot \,}_X, t)$. First, take $\Gamma$ be an $\tfrac{t}{2(1+\delta_\sigma(A))^{\sfrac{1}{2}}}$-net for the set $T_{\sigma} \sse \K^n$ with respect to the $\ell_2$ norm. By the same argument as in the proof of Lemma \ref{lem:smallt}, we know that $\Gamma$ can be chosen so that its cardinality obeys
    \begin{align} \label{eq:cardbound}
        \abs{\Gamma} \leq \binom{n}{\sigma}\left(1+\tfrac{4(1+\delta_\sigma(A))^{\sfrac{1}{2}}}{t}\right)^{2\sigma}.
    \end{align}
Now let $v \in U$ be arbitrary, with $v = A\check{v}$. For each $\check{v}_k$, choose an $\check{u}_k \in \Gamma$ with
    \begin{align*}
        \left\|{\tfrac{\check{v}_k}{\norm{\check{v}_k}}-\check{u}_k}\right\|\leq \tfrac{t}{2(1+\delta_\sigma(A))^{\sfrac{1}{2}}}.
    \end{align*}
    This is possible due to the fact that $\frac{\check{v}_k}{\norm{\check{v}_k}} \in T_\sigma$. Notice that we can make sure that $\supp \check{v}_k = \supp \check{u}_k$, by the construction of the net of $T_\sigma$. Now let us define the random variable $Z$ through 
    \begin{align*}
        \prb{Z = e_k \otimes A\check{u}_k} = \norm{\check{v}_k}, \quad \prb{Z=0} = 1- \norm{v}_{1,2}.
    \end{align*}
    Letting $Z^{(r)}$, $r \in [M]$ (for some $M$ later to be chosen) be independent copies of $Z$, we now finally define
    \begin{align*}
        \underline{v} = \tfrac{1}{M} \sum_{r \in [M]} Z^{(r)}.
    \end{align*}
    Across all possible values of $v$, the variables $\underline{v}$ can only take on a discrete set of values with a certain size. Let us record that for future reference.
    \begin{lem} \label{lem:net}
        There is a set $S$ with
        \begin{align*}
            \abs{S} \leq \left(2\mu \binom{n}{\sigma}\left(1+\tfrac{4(1+\delta_\sigma(A))^{\sfrac{1}{2}}}{t}\right)^{2\sigma}\right)^M
        \end{align*}
        so that 
        \begin{align*}
            \prb{ \forall v \in A(B_{1,2} \cap \Sigma_\sigma^\mu): \, \underline{v} \in S} = 1.
        \end{align*}
    \end{lem}
    \begin{proof}
        For all $v$, the corresponding random variable $Z$ takes values in the set
    \begin{align*}
        \set{0} \cap \set{ e_k \otimes Au \, \vert \, u \in \Gamma, k \in [\mu]},
    \end{align*}
    which has a cardinality $ 1+ \mu \cdot\abs{\Gamma}\leq 2\mu \cdot\abs{\Gamma}$. 
    This holds true for each copy $Z^{(r)}$. Hence, the array $(Z^{(0)}, \dots, Z^{(M-1)})$ can only take on one of $ 2(\mu \cdot \abs{\Gamma})^M$ different values, which extends to $\underline{v}= \tfrac{1}{M} \sum_{r\in [M]} Z^{(r)}$. The bound \eqref{eq:cardbound} on $\abs{\Gamma}$ now yields the claim.
    \end{proof}

It is now left to estimate $\erw{\norm{\underline{v}-v}}_X$. Let us first show that $\erw{\underline{v}}$ is close to $v$.
\begin{lem} \label{lem:erw}
     \begin{align*}
        \norm{\erw{\underline{v}}-v}_X \leq \tfrac{t}{2}
    \end{align*}
\end{lem}
\begin{proof}
    By the definition of $Z$, we have
    \begin{align*}
        \erw{\underline v} = \erw{Z} = \sum_{k \in [\mu]} (e_k \otimes A\check{u}_k) \prb{Z = e_k \otimes A \check{u}_k} = \sum_{k \in [\mu]} e_k \otimes \norm{\check{v}_k}A\check{u}_k.
    \end{align*}
    Let us write $u_k = A\check{u}_k$. The above implies
    \begin{align*}
        \norm{\erw{Z}-v}_X^2 &= \sum_{i \in [m]} \sup_{\theta \in [0,2\pi]} \abs{\sum_{k \in [\mu]}(v_k(i)-\norm{\check{v}_k}u_k(i)) e^{\iota k \theta}}^2 \\
        &\leq \sum_{i \in [m]}  \left( \sum_{k \in [\mu]}\abs{v_k(i)-\norm{\check{v}_k}u_k(i)} \right)^2.
    \end{align*}
    Writing the elements of the inner sums as $\norm{\check{v}_k}^{\sfrac{1}{2}} \cdot \abs{\tfrac{v_k(i)}{\norm{\check{v}_k}^{\sfrac{1}{2}}}-\norm{\check{v}_k}^{\sfrac{1}{2}}u_k(i)}$ and utilizing the Cauchy-Schwarz inequality yields
    \begin{align*}
        \left( \sum_{k \in [\mu]}\abs{v_k(i)-\norm{\check{v}_k}u_k(i)} \right)^2 \leq \left(\sum_{k \in [\mu]} \norm{\check{v}_k}\right) \cdot \left(\sum_{k \in [\mu]}  \abs{\tfrac{v_k(i)}{\norm{\check{v}_k}^{\sfrac{1}{2}}}-\norm{\check{v}_k}^{\sfrac{1}{2}}u_k(i)}^2\right)
    \end{align*}
    for each $i$. Taking the sum over $i$, and utilizing that $\check{v}_k\in B_{1,2}$, we arrive at
    \begin{align*}
        \norm{\erw{Z}-v}_X^2 &\leq  \sum_{i \in [m]}\sum_{k \in [\mu]}\abs{\tfrac{v_k(i)}{\norm{\check{v}_k}^{\sfrac{1}{2}}}-\norm{\check{v}_k}^{\sfrac{1}{2}}u_k(i)}^2 =  \sum_{k \in [\mu]} \norm{\check{v}_k} \cdot \left\|{\tfrac{v_k}{\norm{\check{v}_k}} - u_k }\right\|^2\,.
    \end{align*}
    Now, for each $k$, we have
    \begin{align*}
        \left\|\tfrac{v_k}{\norm{\check{v}_k}} - u_k \right\| = 
        \left\|{A\left(\tfrac{\check{v}_k}{\norm{\check{v}_k}} - \check{u}_k\right) }\right\| \leq (1+\delta_\sigma(A))^{\sfrac{1}{2}}
        \left\|{ \tfrac{\check{v}_k}{\norm{\check{v}_k}} - \check{u}_k}
        \right\|
        \leq \frac{(1+\delta_\sigma(A))^{\sfrac12}\, t}%
        {2(1+\delta_\sigma(A))^{\sfrac{1}{2}}} = \tfrac{t}{2}.
    \end{align*}
    We utilized that $\check{v}_k$ and $\check{u}_k$ are $\sigma$-sparse with the same support, and that $\Gamma$ is a net for $T_\sigma$. Consequently,
    \begin{align*}
        \sum_{k \in [\mu]} \norm{\check{v}_k} \cdot \left\|{\tfrac{v_k}{\norm{\check{v}_k}} - u_k }\right\|^2 \leq \tfrac{t^2}{4}  \sum_{k \in [\mu]} \norm{\check{v}_k}  \leq \tfrac{t^2}{4}\,.
    \end{align*}
\end{proof}

    The above lemma implies that it suffices to study $\erw{\norm{\underline{v}-\erw{\underline{v}}}_X}$ to get the bound we want. This is the purpose of the next lemma.
    \begin{lem} \label{lem:dev}
        Under the assumption that $M \geq 2\log(\mu m)$, we have
        \begin{align*}
            \erw{\left\|{\underline{v}-\erw{\underline{v}}}\right\|_X} \leq 9\pi \sqrt{\frac{\log(8\mu m)(1+\delta_\sigma(A))}{M}}
        \end{align*}
    \end{lem}

\begin{proof}
    First, a symmetrization argument (see e.g \cite[Lemma 8.4]{FouRau2013}) yields that
    \begin{align*}
        \erw{\norm{\underline{v}-\erw{\underline{v}}}_X}  = \erw{\left\|{\tfrac{1}{M} \sum_{r \in [M]} Z^{(r)}-\erw{Z^{(r)}}}\right\|_X} \leq  \mathbb{E}\left(\tfrac{2}{M}\left\|{\sum_{r \in [M]}\varepsilon_r Z^{(r)}}\right\|_X\right),
    \end{align*}
    where $\varepsilon_r$ are independent Rademacher random variables. To estimate this expected value, let us for each fixed $i \in [m]$ and $\theta \in [0,2\pi]$ consider the variable
    \begin{align*}
        Y_i(\theta) = \tfrac{2}{M} \sum_{r\in [M]} \sum_{k \in [\mu]}\varepsilon_r Z^{(r)}_k(i)e^{\iota k \theta}.
    \end{align*}
    $Y_i(\theta)$ is is a random variable of the form $\sum_{r \in [M]} \beta_r \varepsilon_r$, with 
    \begin{align*}
        \beta_r = \tfrac{2}{M}\sum_{k \in [\mu]} Z^{(r)}_k(i)e^{\iota k \theta}.
    \end{align*}
    For each instance of $Z^{(r)}$, $Z^{(r)}_k$ is only non-zero for at most one index $k = k_r$. Consequently,
    \begin{align*}
        \abs{\beta_r} = \tfrac{2}{M}\abs{Z^{(r)}_{k_r}(i)e^{\iota k \theta}} = \tfrac{2}{M}\abs{A\check{u}_{k_r}(i)},
    \end{align*}
    where we used the definition of $Z$ in the final step. (Note that if $Z^{(r)}=0$, we can define $\check{u}_{k_r}=0$). By the Hoeffding inequality for Rademacher sums \cite[Lem. 8.8]{FouRau2013},
    \begin{align*}
        \prb{\abs{ \sum_{r\in [M]}\epsilon_r \beta_r}\geq \alpha \norm{\beta}} \leq 2e^{-\sfrac{\alpha^2}{2}},
    \end{align*}
   were $\beta \in \C^M$ is arbitrary but fixed,
    we obtain
    \begin{align} \label{eq:oneevent}
        \mathbb{P}_\varepsilon \left( \abs{Y_i(\theta)} \geq \alpha \Big(\tfrac{4}{M^2}\!\sum_{r\in [M]}\abs{A\check{u}_{k_r}(i)}^2\Big)^{\sfrac{1}{2} }\right) \leq 2e^{-\sfrac{\alpha^2}{2}},
    \end{align}
    where the $\mathbb{P}_\varepsilon$ means that the probability is with respect to the draw of the $\varepsilon$, i.e. conditioned on the draw of the $Z^{(r)}$. 
    
    Now let us transform this bound into a bound uniform over $\theta \in [0,2\pi]$ and $i\in [m]$. We have
    \begin{align*}
        \abs{Y_i(\theta)-Y_i(\theta')} &\leq \tfrac{2}{M} \sum_{r\in [M]} \sum_{k \in [\mu]}\abs{\varepsilon_r} \abs{Z^{(r)}_k(i)}\vert{e^{\iota k \theta}-e^{\iota k \theta'}\vert} \\
        &\leq \tfrac{2}{M} \sum_{r\in [M]} \sum_{k \in [\mu]} \abs{Z^{(r)}_k(i)}\abs{k(\theta- \theta')}.
    \end{align*}
     Also, by the same argument as before, the sums over $k$ actually only consist of one element, and that element is $\abs{A\check{u}_{k_r}(i)}\abs{k_r(\theta- \theta')}$. Since $\abs{k_r(\theta- \theta')} \leq \mu \abs{\theta-\theta'}$, we conclude that the above is smaller than
    \begin{align*}
       \tfrac{2}{M} \sum_{r\in [M]} \abs{A\check{u}_{k_r}(i)} \mu \abs{\theta-\theta'}\,.
    \end{align*}
    Therefore, if $I_\mu = \set{\tfrac{k}{\mu}\cdot 2\pi \, \vert \, k \in [\mu]}$, we have
    \begin{align} \label{eq:netsup}
        \sup_{\theta \in [0,2\pi]}\abs{Y_i(\theta)} \leq \sup_{\theta \in I_\mu} \abs{Y_i(\theta)} + \tfrac{2\pi}{M} \sum_{r\in [M]} \abs{A\check{u}_{k_r}(i)} \leq \sup_{\theta \in I_\mu} \abs{Y_i(\theta)} + \tfrac{2\pi}{\sqrt{M}} \left(\sum_{r\in [M]} \abs{A\check{u}_{k_r}(i)}^2\right)^{\sfrac{1}{2}}\,.
    \end{align}
    This equation together with a union bound over $i \in [m]$ and $\theta \in I_\mu$ in \eqref{eq:oneevent}  for $M \geq \alpha$ yields
    \begin{align*}
        \mathbb{P}_\varepsilon\left( \sup_{i \in [m],\theta \in [0,2\pi]} \abs{Y_i(\theta)} \geq (\tfrac{2\alpha}{M}+\tfrac{2\pi}{\sqrt{M}} )\Big(\sum_{r\in [M]}\abs{A\check{u}_{k_r}(i)}^2\Big)^{\sfrac{1}{2} }\right) \leq 2\mu m e^{-\sfrac{\alpha^2}{2}}.
    \end{align*}
    Now, utilizing  the inequality $\sfrac{3\pi \alpha}{M} \geq \sfrac{2\alpha}{M} + \frac{2\pi}{\sqrt{M}}$ which for $\alpha \geq \sqrt{M}$, the above yields
    \begin{align*}
          \mathbb{P}_\varepsilon\left( \sup_{i \in [m],\theta \in [0,2\pi]}  \abs{Y_i(\theta)} \geq \tfrac{3\pi\alpha}{M}\Big(\sum_{r\in [M]}\abs{A\check{u}_{k_r}(i)}^2\Big)^{\sfrac{1}{2} }\right) \leq 2\mu m e^{-\sfrac{\alpha^2}{2}}\,.
    \end{align*}
    Assuming the counter event, we have
    \begin{align*}
        \left\|{\frac{1}{M}\sum_{r\in [M]} \varepsilon_r Z^{(r)}}\right\|_X^2 &= \sum_{i \in [m]}\sup_{\theta \in [0,2\pi]} \abs{\sum_{k \in [\mu]}\tfrac{1}{M}\sum_{r\in [M]}\varepsilon_r Z_k^r(i)e^{\iota k \theta}}^2 \\
        &= \sum_{i \in [m]}\sup_{\theta \in [0,2\pi]} \abs{Y_i(\theta)}^2 
        \leq \tfrac{9\pi^2\alpha^2}{M^2} \sum_{i \in [m]}\sum_{r\in [M]}\abs{A\check{u}_{k_r}(i)}^2 
        \\&= \tfrac{9\pi^2\alpha^2}{M^2} \sum_{r\in [M]} \norm{A\check{u}_{k_r}}^2 \leq \tfrac{4\pi^2\alpha^2}{M} (1+\delta_\sigma(A))\,,
    \end{align*}
    where we in the final step utilized that $\check{u}_{k_r}\in T_\sigma$. Hence, for $\alpha \geq \sqrt M$
    \begin{align*}
         \mathbb{P}_\varepsilon\left( \left\|{\frac{1}{M}\sum_{r\in [M]} \varepsilon_r Z^{(r)}}\right\|_X  \geq \tfrac{3\pi\alpha}{\sqrt{M}}(1+\delta_\sigma(A))^{\frac{1}{2}}\right) \leq 2\mu m e^{-\sfrac{\alpha^2}{2}}\,.
    \end{align*}
    This tail-bound can be turned into a bound on the expected value by elementary methods. For convenience, let us just invoke \cite[Prop 7.14]{FouRau2013}. All of its assumptions are met, in particular since we have assumed that  $M\geq 2\log(2\mu m)$. It yields, together with some elementary algebra, the bound
    \begin{align*}
        \mathbb{E}_\varepsilon\left( \left\|{\frac{2}{M}\sum_{r\in [M]} \varepsilon_r Z^{(r)}}\right\|_X\right) &\leq   9\pi \sqrt{\frac{\log(8\mu m)(1+\delta_\sigma(A))}{M}}
    \end{align*}
    where we used that $\check{u}_{k_r} \in T_\sigma $ for all $r$. The theorem of Fubini now yields the claim.
    \end{proof}

Now let us turn the above bounds into a bound of the covering number of $\calN(B_{A(B_{1,2}\cap \Sigma^\mu_\sigma)}, \norm{\, \cdot \,}_{X}, t)$.

\begin{proposition} \label{lem:bigt}
    For $t \leq 2\cdot 81\pi^2$, we have 
    \begin{align*}
        \log(\calN(B_{A(B_{1,2}\cap \Sigma^\mu_{{\sigma}},\sigma)}, \norm{\, \cdot \,}_{X}, t)) \leqsim \tfrac{\log(2\mu m)(1+\delta_\sigma(A))}{t^2}\left( \log(\mu) + \sigma\left( \log(n) +  \log(1 + \tfrac{4(1+\delta_\sigma(A))^{\sfrac{1}{2}}}{t})\right)\right).
    \end{align*}
\end{proposition}
\begin{proof}
    Let us choose
    \begin{align*}
         M =\left\lceil\tfrac{4\cdot 81\pi^2}{t^2}\log(8\mu m)(1+\delta_\sigma(A))\right\rceil.
    \end{align*}
    Then, due to the assumption  $t \leq 2\cdot 81\pi^2$, $M \geq 2\log(2\mu m)$. Hence, Lemma \ref{lem:dev} is applicable. Together with Lemma \ref{lem:erw} and the triangle inequality, we get
    \begin{align*}
        \erw{\norm{v-\underline{v}}} \leq t.
    \end{align*}
   Hence, at least one of the values $\underline{v}$ can take on must lie closer to $v$ than $t$, so that the set $S$ described in Lemma \ref{lem:net} forms a $t$-net for $A(B_{1,2}\cap \Sigma^{{\mu}}_{{\sigma}})$. Consequently
    \begin{align*}
        \log(\calN(U,\norm{\, \cdot \,}_X,t)) &\leq \log(\abs{S}) \leqsim M \left(\log(2\mu) + \sigma \log(n) + \sigma \log\left(1+\tfrac{4(1+\delta_\sigma(A))^{\sfrac{1}{2}}}{t}\right) \right)\\
        &  \leqsim \tfrac{\log(8\mu m)(1+\delta_\sigma(A))}{t^2}\left( \log(\mu) + \sigma \log(n) + \sigma \log\left(1 + \tfrac{4(1+\delta_\sigma(A))^{\sfrac{1}{2}}}{t}\right)\right),
    \end{align*}
    which is the claim.
    \end{proof}
    
\subsection{Estimating the parameters in Theorem \ref{th:krahmerbound}}
Now we have all the tools we need to bound $\gamma_2(\calA(AT_{s,\sigma}), \norm{\, \cdot \,}_{2\to 2})$. For convenience, let us begin by collecting all our previous estimates.
\begin{cor} \label{cor:covering}
    Define 
    \begin{align*}
        \varphi_{0}(t) &= s\log\mu + s\sigma \log(n) + 2s\sigma \log(1+\tfrac{4}{t}) \\
        \varphi_{1}(t) & = \tfrac{\log(8\mu m)}{t^2}\left( \log(\mu) + \sigma \log(n) + \sigma \log(1 + \tfrac{16}{t})\right), \quad
    \end{align*}
    It holds that 
   \begin{align*}
        \log(\calN(\calA(AT_{s,\sigma}), \norm{\, \cdot \,}_{2\to 2}, t)) &\leqsim \varphi_0( \sqrt{\tfrac{\mu}{s}}(1+\delta_\sigma(A))^{-\sfrac{1}{2}} t) \\
        \text{and}\quad
        \log(\calN(\calA(AT_{s,\sigma}), \norm{\, \cdot \,}_{2\to 2}, t)) &\leqsim \varphi_1(\sqrt{\tfrac{\mu}{s}}(1+\delta_\sigma(A))^{-\sfrac{1}{2}} t) \quad\text{{for}}\  t\leq  \tfrac{8 \cdot 81\sqrt{s}}{\sqrt{\mu}}\,.
   \end{align*}
 
\end{cor}
\begin{proof}
The first line is just a paraphrasing of Lemma \ref{lem:smallt}. As for the second, we first use elementary properties of covering numbers (see e.g. \cite[App. C.2]{FouRau2013}) to argue that 
    \begin{align*}
         \calN(\calA(AT_{s,\sigma}), \norm{\, \cdot \,}_{2\to 2}, t)  &=   \calN(\tfrac{1}{\sqrt{s}}\calA(AT_{s,\sigma}), \norm{\, \cdot \,}_{2\to 2}, \tfrac{t}{\sqrt{s}}) \stackrel{\ref{lem:distances}}{\leq} \calN(\tfrac{1}{\sqrt{s}}AT_{s,\sigma}, \tfrac{2}{\sqrt{\mu}}\norm{\, \cdot \,}_{X}, \tfrac{t}{\sqrt{s}})  \\
         &= \calN(\tfrac{1}{\sqrt{s}}AT_{s,\sigma}, \norm{\, \cdot \,}_{X}, \tfrac{t\sqrt{\mu}}{2\sqrt{s}}) \leq  \calN(A(B_{1,2} \cap \Sigma_\sigma^\mu), \norm{\,\cdot \,}_X, \tfrac{t\sqrt{\mu}}{4\sqrt{s}}).  
    \end{align*}
    where we used Lemma \ref{lem:l12ball} in the final step. Now we apply Lemma \ref{lem:bigt}. Note that the bound given there is applicable for $\tfrac{t\sqrt{\mu}}{4\sqrt{s}}\leq 2\cdot 81\pi^2$, as assumed in the lemma.
\end{proof}
Using the above, it is now straight-forward to bound the parameters relevant for Theore \ref{th:krahmerbound}
\begin{cor}  \label{cor:serious_bounds}
Suppose that $\delta_\sigma(A)<1$. Then,
    \begin{align*}
        d_{2\to2}(\calA(AT_{s,\sigma}),\norm{\cdot}_{2\to2}) &\leq \sqrt{\tfrac{s}{\mu}}(1+\delta_\sigma(A))^{\sfrac{1}{2}}, \quad d_{F}(\calA(AT_{s,\sigma}),\norm{\cdot}_{2\to2}) \leq (1+\delta_\sigma(A))^{\sfrac{1}{2}} \\
        \gamma_2(\calA(AT_{s,\sigma}), \norm{\cdot}_{2\to2}) &\leqsim \tfrac1{\sqrt\mu}(1+\delta_\sigma(A))^{\sfrac{1}{2}}(s\log(s)^2\log(\mu m)\log(\mu) + s\sigma\log(n))^{\sfrac{1}{2}}
    \end{align*}
\end{cor}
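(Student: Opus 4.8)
The plan is to establish the three bounds separately, in the order stated, the first two being short consequences of computations already in hand and the third requiring a Dudley-type entropy integral.

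The bound on $d_F$ is immediate from the identity $\norm{\calA(v)}_F=\norm{v}$ recorded in the proof of Lemma~\ref{ref:umschreibung}: any $T=\calA(A\check w)$ with $\check w\in T_{s,\sigma}$ satisfies $\norm{T}_F=\norm{A\check w}$, and since each block $\check w_k$ is $\sigma$-sparse, the block-wise RIP estimate $\norm{A\check w}^2\leq(1+\delta_\sigma(A))\norm{\check w}^2$ (already used in the proof of Lemma~\ref{lem:factorization}) together with $\norm{\check w}=1$ gives $d_F\leq(1+\delta_\sigma(A))^{\sfrac{1}{2}}$.

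For $d_{2\to 2}$ I would re-run the computation \eqref{eq:spectral_flatness}, but applied to $v=A\check w$ itself rather than to a difference of two elements. The essential observation is that, because the convolution is circular, the blocks $N^i$ of the row concatenation $\calA(v)$ are in fact \emph{circulant}, not merely Toeplitz, hence unitarily diagonalised by the discrete Fourier transform; thus $\snorm{N^i}$ equals the maximal modulus of the Fourier transform of $v_\cdot(i)/\sqrt\mu$, \emph{without} the extra factor $2$ that the Toeplitz bound contributes in Lemma~\ref{lem:distances}. Since $\check w$ has at most $s$ nonzero blocks, for each $i$ the sequence $k\mapsto v_k(i)$ is supported on at most $s$ indices, so Cauchy--Schwarz yields $\snorm{N^i}\leq\sqrt{\tfrac{s}{\mu}}\big(\sum_k\abs{v_k(i)}^2\big)^{\sfrac{1}{2}}$. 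Combining $\snorm{\calA(v)}^2\leq\sum_i\snorm{N^i}^2$ with $\norm{v}^2=\norm{A\check w}^2\leq1+\delta_\sigma(A)$ then gives $d_{2\to2}\leq\sqrt{\tfrac{s}{\mu}}(1+\delta_\sigma(A))^{\sfrac{1}{2}}$.

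The estimate on $\gamma_2$ is the substantial step. I would feed the covering-number bound of Lemma~\ref{lem:distances} into the entropy integral $\gamma_2(\calT,\snorm{\cdot})\leqsim\int_0^\infty\sqrt{\log\calN(\calT,\snorm{\cdot},u)}\,\dd u$ with $\calT=\calA(AT_{s,\sigma})$. Setting $\kappa:=2\sqrt{\tfrac{s}{\mu}}(1+\delta_\sigma(A))^{\sfrac{1}{2}}$ (which is, up to a constant, the operator-norm diameter just found) and bounding $\log\binom{\mu}{s}\leqsim s\log\mu$, $\log\binom{n}{\sigma}^s\leqsim s\sigma\log n$, Lemma~\ref{lem:distances} gives
\begin{align*}
\log\calN(\calT,\snorm{\cdot},u)\;\leqsim\; s\log\mu+s\sigma\log n+s\sigma\log\!\Big(1+\tfrac{\kappa}{u}\Big).
\end{align*}
Since $\calN(\calT,\snorm{\cdot},u)=1$ once $u$ exceeds the diameter, the integral effectively runs over $u\in(0,c\kappa]$. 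Using $\sqrt{a+b+c}\leq\sqrt a+\sqrt b+\sqrt c$ and integrating termwise, the two $u$-independent summands contribute $\leqsim\kappa\,(s\log\mu+s\sigma\log n)^{\sfrac{1}{2}}$, while the substitution $u=\kappa v$ turns the last summand into $\kappa\sqrt{s\sigma}\int_0^1\sqrt{\log(1+1/v)}\,\dd v\leqsim\kappa\sqrt{s\sigma}$, which is absorbed into the previous contribution because $\log n\geq1$. Inserting $\kappa\sim\sqrt{\tfrac{s}{\mu}}(1+\delta_\sigma(A))^{\sfrac{1}{2}}$ yields exactly the asserted bound on $\gamma_2(\calA(AT_{s,\sigma}),\snorm{\cdot})$.

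The one genuinely delicate point is this last step: one must observe that the scale $\kappa$ appearing inside the logarithm of the covering number coincides (up to a constant) with the operator-norm diameter $d_{2\to2}$, which is what makes the entropy integral truncate at $\kappa$ and produces the clean product form ``diameter $\times\sqrt{\text{log-entropy}}$''; and one must check that the near-origin behaviour $\sqrt{\log(\kappa/u)}$ of the integrand is integrable, which it is, contributing only a constant factor. Everything else is routine bookkeeping with the RIP constant of $A$ and the circulant structure of the blocks of $\calA$.
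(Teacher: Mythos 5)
Your proof is correct and follows essentially the same route as the paper's: the Frobenius bound from the isometry $\norm{\calA(v)}_F=\norm{v}$ combined with the blockwise RIP of $A$, the operator-norm bound from the per-block spectral estimate, and Dudley's entropy integral truncated at the operator-norm diameter for $\gamma_2$. Your only refinement is the observation that the blocks $N^i$ are circulant (all indices being cyclic), which removes the factor $2$ from the generic Toeplitz bound and recovers the constant exactly as stated, whereas reading the bound off Lemma \ref{lem:distances} literally would carry that factor along (harmlessly, since it is absorbed by the $\leqsim$ in the $\gamma_2$ estimate).
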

\begin{proof}
    The statement about $ d_{2\to2}(\calA(AT_{s,\sigma}),\norm{\cdot}_{2\to2})$ follows directly from Lemma \ref{lem:distances}. As for the Frobenius norm statement, we have for $w\in AT_{s,\sigma}$ with representation $\check{w}\in T_{s,\sigma}$, we have
    \begin{align*}
        \norm{\calA(w)}_F^2 = \tfrac{1}{\mu} \sum_{k,j \in [\mu], i \in [n]} \abs{w_{j-k}(i)}^2 = \norm{w}^2 \leq (1+\delta_\sigma(A))\norm{\check{w}}^2,
    \end{align*}
    which implies the bound.

    To bound the $\gamma_2$-functional, let us introduce the notation
    \begin{align*}
        \kappa = \sqrt{\tfrac{s}{\mu}}(1+\delta_\sigma(A))^{\sfrac{1}{2}}.
    \end{align*}
    The aforementioned Dudley bound implies
    \begin{align*}
        \gamma_2(\calA(AT_{s,\sigma}),\norm{\cdot}_{2\to 2} &\leq \int_0^\kappa \log(\calN(\calA(AT_{s,\sigma}), \norm{\cdot}_{2\to 2},t))^{\sfrac{1}{2}} \, \dd t,
    \end{align*}
    where we in particular utilized that $d_{2\to2}(\calA(AT_{s,\sigma}),\norm{\cdot}_{2\to2}) \leq \kappa$. Now, Lemma \ref{cor:covering} implies that 
    \begin{align} \label{eq:covbound}
        \log(\calN(\calA(AT_{s,\sigma}), \norm{\cdot}_{2\to 2},t))^{\sfrac{1}{2}} \leq \min(\varphi_0(\kappa^{-1}t), \varphi_1(\kappa^{-1}t))^{\sfrac{1}{2}}.
    \end{align}
    for all $t \leq  \tfrac{8 \cdot 81\sqrt{s}}{\sqrt{\mu}}$. However, since trivially
    \begin{align*}
        \kappa = \sqrt{\tfrac{s}{\mu}}(1+\delta_\sigma(A))^{\sfrac{1}{2}} \leq \tfrac{\sqrt{2s}}{\sqrt{\mu}} \le \tfrac{8 \cdot 81\sqrt{s}}{\sqrt{\mu}}
    \end{align*}
  the bound is valued for all $t\leq \kappa$. Hence,
    \begin{align*}
        \int_0^\kappa \log(\calN(\calA(AT_{s,\sigma}), \norm{\cdot}_{2\to 2},t))^{\sfrac{1}{2}} \, \dd t &\leq \int_0^\kappa \min(\varphi_0(\kappa^{-1}t), \varphi_1(\kappa^{-1}t))^{\sfrac{1}{2}} \dd t \\
        &= \kappa \int_0^1 \min(\varphi_0(t), \varphi_1(t))^{\sfrac{1}{2}} \dd t.
    \end{align*}
    Now, let's split the integral to one from $0$ to  $\tfrac{1}{\sqrt{s}}$, and one from $\tfrac{1}{\sqrt{s}}$ to $1$. For the first integral, we have
    \begin{align*}
        \kappa \int_0^1 \min(\varphi_0(t), \varphi_1(t))^{\sfrac{1}{2}} \dd t \leqsim \kappa  \int_0^{\sfrac{1}{\sqrt{s}}} \varphi_0(t)^{\sfrac{1}{2}} \dd t &\leqsim  \tfrac{\kappa}{\sqrt{s}}\left(s\sigma \log(n) + s\log(\mu)\right)^{\sfrac{1}{2}} 
    \end{align*}
    simply because the integral of $\log(1+\tfrac{4}{t})^{\sfrac{1}{2}}$ from $0$ to $1$ converges. We may furthermore estimate
    \begin{align*}
        \tfrac{\kappa}{\sqrt{s}}\left(s\sigma \log(n) + s\log(\mu)\right)^{\sfrac{1}{2}} 
        &= \tfrac1{\sqrt{\mu}}(1+\delta_\sigma(A))^{\sfrac{1}{2}}\left(s\sigma \log(n) + s\log(\mu)\right)^{\sfrac{1}{2}} \\  &\leqsim\tfrac1{\sqrt{\mu}}(1+\delta_\sigma(A))^{\sfrac{1}{2}} (s\log(s)^2\log(\mu m)\log(\mu) + s\sigma\log(n))^{\sfrac{1}{2}}
    \end{align*}As for the second one, note that
    \begin{align*}
        \varphi_1(t)^{\sfrac{1}{2}} \leq \tfrac1t \left(2\log(2\mu m)\left(\log(\mu)+\sigma \log(n) + \sigma \log(1+8\sqrt{\sigma})\right)\right)^{\sfrac{1}{2}}
    \end{align*}
    for $t\geq \tfrac{1}{\sqrt{s}}$, and $\int_{\sfrac{1}{\sqrt{s}}}^1 \tfrac{1}{t}\dd t = \log(\sqrt{s})$.  Hence,
    \begin{align*}
        \kappa \int_{\sfrac{1}{\sqrt{s}}}^1 \varphi_1(t)^{\sfrac{1}{2}} \dd s 
        &\leq \tfrac{\sqrt{s}}{\sqrt\mu}(1+\delta_\sigma(A))^{\sfrac{1}{2}}\log(s)\left(2\log(2\mu m)\left(\log(\mu)+\sigma \log(n) + \sigma \log(1+8\sqrt{\sigma})\right)\right)^{\sfrac{1}{2}} \\
        &\leqsim \tfrac1{\sqrt{\mu}}(1+\delta_\sigma(A))^{\sfrac{1}{2}} (s\log(s)^2\log(\mu m)\log(\mu) + s\sigma\log(n))^{\sfrac{1}{2}}\,.
    \end{align*}
    The claim has been proven.
\end{proof}

\subsection{Conclusion}

We now have all the tools we need to prove our main theorem. Let us begin by bounding $\Delta_{s,\sigma}(\mathcal{H})$.

\begin{theorem} \label{th:proto_main_result}
   Under our assumptions,
    \begin{align*}
        \prb{ \Delta_{s,\sigma}(\widehat{H})  > (1+\delta_\sigma(A))\delta_0} \leq  2\epsilon.
    \end{align*}
\end{theorem}
\begin{proof}
    Let us define
    \begin{align*}
        \lambda = \tfrac{1}{\mu}(s\log(s)^2\log(\mu m)\log(\mu) + s\sigma \log(n)).
    \end{align*}
    Our assumption on $\mu$ states that
    \begin{align*}
        \lambda \leqsim \delta_0^2 \min(1,\log( \epsilon^{-1})^{-1}).
    \end{align*}
    This together with 
    Corollary \ref{cor:serious_bounds} then states that, with the notation used in Theorem \ref{th:krahmerbound},
    \begin{align*}
        E &\leqsim  (1+\delta_\sigma(A))( \lambda +\sqrt{\lambda}) \leqsim (1+\delta_\sigma(A))\delta_0 \\
        V &\leqsim (1+\delta_\sigma(A))(\lambda + \sqrt{\lambda}) \leqsim  (1+\delta_\sigma(A))\log( \epsilon^{-1})^{-\sfrac{1}{2}}\delta_0, \\
        U &\leqsim (1+\delta_\sigma(A)) \tfrac{s}{\mu} \leq (1+\delta_\sigma(A))\lambda \leq \delta_0 \log( \epsilon^{-1})^{-1}
    \end{align*}
    Therefore, said theorem implies
    \begin{align*}
        \prb{ \Delta_{s,\sigma}(\widehat{H})  > (1+\delta_\sigma(A))\delta_0} \leq 2\exp(-\log(\epsilon^{-1})) = 2\epsilon,
    \end{align*}
    which was the claim.
\end{proof}

We may now  conclude the proof.
\begin{proof}[Proof of Theorem \ref{th:main_result}]
    First, Lemma \ref{lem:lifting}  and Lemma \ref{lem:factorization}  imply that
    \begin{align*}
      \delta_{s,\sigma}(\calC) \stackrel{\ref{lem:lifting}}{=} \delta_{s,\sigma}(\calH) \stackrel{\ref{lem:factorization}}{\leq} \Delta_{s,\sigma}(\widehat{\calH}) + \delta_\sigma(A) + \Delta_{s,\sigma}(\widehat{\calH}) \delta_\sigma(A).
    \end{align*}
    Now, the lower bound on the number of measurements \ref{eq:mubound} together with Theorem \ref{th:proto_main_result} implies that
    \begin{align*}
        \prb{{\Delta}_{(s,{\sigma})}(\widehat{\calH}) \geq \delta_0 (1+\delta_\sigma(A))}  \leq \epsilon.
    \end{align*}
    Thus, with a probability larger than $1-\epsilon$,
    \begin{align*}
        \delta_{(s,\sigma)}(\calC) \leq \delta_0(1+\delta_\sigma(A)) + \delta_\sigma(A) + \delta_0\delta_\sigma(A)(1+\delta_\sigma(A)= (1+ \delta_\sigma(A))^2 \delta_0 + \delta_\sigma(A),
    \end{align*}
    which was the claim.
    
    \end{proof}

\section{Numerics}\label{sec:numerics}

Let us make a small numerical experiment to test Theorem \ref{th:main_result}. In particular, we want to investigate whether the number of measurements the needed exactly scales as our complexity bound suggests, or if e.g. the $\log(s)$-terms more likely are proof artefacts. 
{Note that such a comparison is inevitably indirect since we numerically observe average case performances while the theoretical results are worst-case bounds over all problem instances.} 

\paragraph{Implementation details} We have implemented our algorithm in the python package \texttt{CuPy} \cite{cupy_learningsys2017}, an open source package for running \texttt{NumPy}-based scripts on  NVIDIA GPUs. We have chosen to do so to utilize the opportunities for parallelization the HiHTP-algorithm allows in this context:
When applying the operator $\calH$ we need to calculate $S_k(Qw_k)$ for all $k \in [\mu]$. Similarly, when applying applying $\calH^*$
\begin{align*}
    \calH^*(y) = \sum_{k \in [\mu]} e_k \otimes (Q^*S_{-k}y),
\end{align*}
$(Q^*S_{-k}y)$ needs to be calculated for all $k \in [\mu]$. Both these sets of calculations can be done in parallel.

Finally, the application of the thresholding operation also benefits from parallelization -- as discussed in \cite{hiHTP}, the application of the thresholding operation consists in first projecting each block onto the  set of $\sigma$-sparse vectors, and then choosing the $s$ projected blocks with the largest norms. The first step here can again be parallellized over the block dimension.

For solving the least-squares problems in each step of the HiHTP-algorithm, we apply a  conjugated gradient algorithm, which we stop once the $\ell_2$-residual is smaller than 1e-4, or $100$ iterations have passed. 
This is justified since in the regime of HiRIP, the  restricted least squares problems are expected to be well-conditioned.

\paragraph{Experimental setup} In all of the experiments, we choose $A=\id$ and in particular $m=n$. $Q=U$ is chosen as a properly renormalized standard Gaussian matrix. We try to solve the blind deconvolution problem for different values of $s, \sigma $ and $\mu$. The values of the sparsity parameters range $\sigma =5,10,15$ and $s = 1, 2, \dots, 7$. We test three values for $n$ 
($n=50$, $n=170$ and $n=350$) and let $\mu$ be $10,20, \dots, 250$. For each quadruple $(n,\mu,s,\sigma)$ we draw $100$ sparse instances of $b$ and $h$ by choosing a sparse support uniformly at random and filling the non-zero entries with independent normally distributed values; we `measure' them with $\calH$, and try to recover it with the HiHTP algorithm. 
We halt the algorithm once the difference in Frobenius norm of consecutive approximations drops below 1e-6 or after $25$ iterations. We then solve the final restricted least squares problem with a lower residual tolerance (1e-6.5). A success is declared when the final relative error between the approximation and the original value for $h \otimes b$ in Frobenius norm is  smaller than 1e-6. The results are depicted in Figure \ref{fig:recprobs}.

\begin{figure}
    \centering
    \includegraphics[width=.7\textwidth]{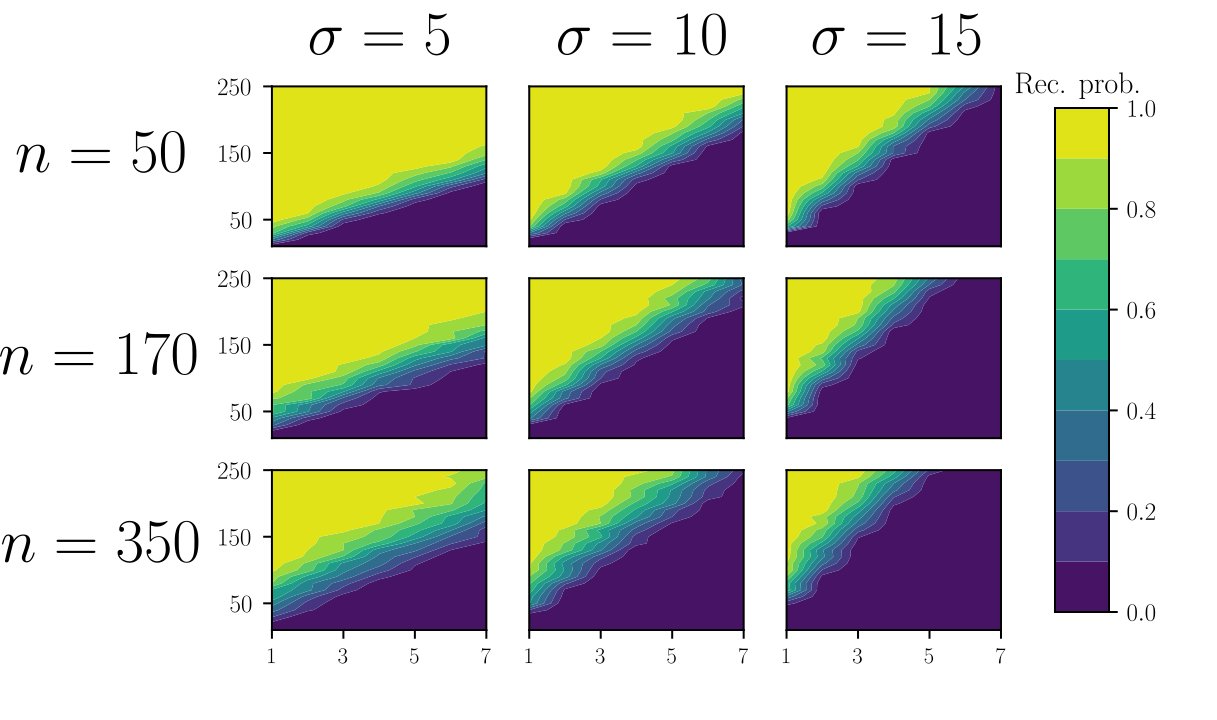}
    \caption{Recovery probability plots over $s=1,2,\dots,7$ and $\mu=10,20,\dots,250$ for different values of $n$ and $\sigma$.
    \label{fig:recprobs}}
\end{figure}

\paragraph{Results} By a simple visual inspection, we see that the number of measurements needed for successful recovery seem to scale linearly with $s$ across all sparsities and values for $n$. We furthermore see that the dependence on $n$ is relatively mild. Hence, the results suggest that the $\log(s)$-terms in our complexity bounds are proof artefacts.

Although not crucial -- $s\sigma$ will for most parameter values be the dominating term in our bound anyhow -- let us also try to test the hypothesis that some terms of our sample complexity bound are spurious a bit more thoroughly. For a tuple of integers $a= (a_0,a_1,a_2)$ and constant $C>0$, let us define the parameter
\begin{align*}
 \lambda_{a,C} = \frac{\mu}{s\log(\mu)^{a_0}\log(\mu)^{a_1}\log(s)^{a_2} + C s\sigma\log(n)}.
\end{align*}
The rationale for defining these parameters is clear: $\lambda_{a,C}$ measures the 'oversampling factor' compared to a sample complexity $\beta_a(s,\sigma,\mu,n) + Cs\sigma \log(n)$, where $$\beta_a(s,\sigma,\mu,n) =s\log(\mu)^{a_0}\log(\mu)^{a_1}\log(s)^{a_2} \sigma \log(n) + C s\sigma\log(n).$$ Using the \texttt{sklearn}-package \cite{scikit-learn}, we now perform logistic regressions of the empirical success probabilites of our data against $\lambda_{a,C}$ for a range of values of $a$ and $C$. The resulting accuracies, when the optimal $C_a$ for $C$ is used for each $a$ are presented in Table \ref{tab:scores}. We see that when using $a=(0,1,0)$, we obtain the highest accuracy. In particular, parameter values with $a_2\neq 0$ lead to worse predictions. Hence, according to our data, the oversampling factor compared to
\begin{align*}
    s\log(\mu n) + C_a s\sigma \log(n)
\end{align*}
is the best predictor of the empirical success of the HiHTP algorithm, suggesting that the bound $\beta_{0,1,0}$ better indicates the regime of success of HiHTP. The differences are small, so one should be cautious read too much into this. However, it is clear that the predictors including additional $\log(s)$-terms in the complexity perform worse. This points to them  being proof artefacts more than anything else in the bound of Theorem \ref{th:main_result}. In Figure \ref{fig:logfits}, the optimal predictor $\lambda_{(0,1,0)}$ is compared with $\lambda_{(1,1,2)}$, which is the one suggested by Theorem \ref{th:main_result}. The figures clearly suggest that the former fits much better to the data.

\begin{table}
    
\end{table}

\begin{figure}
    \centering
    \includegraphics[width=.4\textwidth]{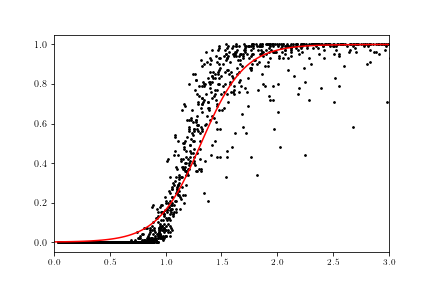}\quad\includegraphics[width=.4\textwidth]{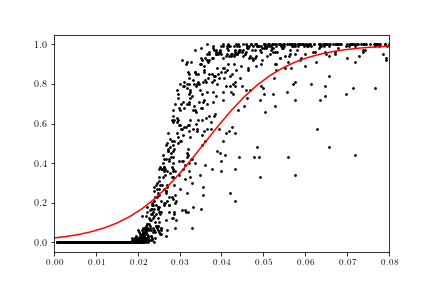}
    \caption{The empirical recovery probabilities and fitted logistic curves plotted against the optimal predictor $\lambda_{(0,1,0)}$(left) and the one suggested by Theorem \ref{th:main_result} $\lambda_{(1,1,2)}$ (right). The vertical axes have been cropped to highlight the phase transition region. }
    \label{fig:logfits}
\end{figure}

\begin{table} \centering
    \begin{tabular}{l|l || l | l ||l | l}
        $\beta_a$ & Accuracy &   $\beta_a$ & Accuracy &   $\beta_a$ & Accuracy  \\
        \hline  $s$ & 91.8\% &  $s\log(s)$ & 91.2\% & 
        $s\log(s)^2$ & 91.1\% \\
        $s\log(\mu)$ & 91.7\% & 
        $s\log(\mu)\log(s)$ & 91.2\% &
        $s\log(\mu)\log(s)^2$ & 91.1\%\\
        $s\log(\mu n)$ & \textbf{92.1\%} & 
        $s\log(\mu n)\log(s)$ & 91.2\%&
        $s\log(\mu n)\log(s)^2$ & 91.1\%\\
        $s\log(\mu n)\log(\mu)$ & 91.7\% & 
        $s\log(\mu n)\log(\mu)\log(s)$ & 90.9\%&
        $s\log(\mu n)\log(\mu)\log(s)^2$ & 90.3\%
    \end{tabular}
    \caption{Accuracy of the predictions resulting from the logistic regressions for different values of $a$. \label{tab:scores}}
\end{table}

\subsection*{Acknowledgement} The authors wish to thank the anonymous reviewers of previous versions of the manuscript for helpful comments and suggestions, leading to a significant improvement of the results. AF acknowledges support from the Wallenberg AI, Autonomous Systems and Software Program (WASP) funded by the Knut and Alice Wallenberg Foundation, and CHAIR. GW is supported by the German Science Foundation (DFG) under grants 598/7-1, 598/7-2, 598/8-1, 598/8-2 and the 6G research cluster (6g-ric.de) supported by the German Ministry of Education and Research (BMBF).

\bibliographystyle{abbrv}
\bibliography{block_model}

\begin{thebibliography}{10}

\bibitem{ahmed2018leveraging}
A.~Ahmed and L.~Demanet.
\newblock Leveraging diversity and sparsity in blind deconvolution.
\newblock {\em IEEE Transactions on Information Theory}, 64(6):3975--4000,
  2018.

\bibitem{ahmed2013blind}
A.~Ahmed, B.~Recht, and J.~Romberg.
\newblock Blind deconvolution using convex programming.
\newblock {\em IEEE Transactions on Information Theory}, 60:1711--1732, 2013.

\bibitem{bahmani2016near}
S.~Bahmani and J.~Romberg.
\newblock Near-optimal estimation of simultaneously sparse and low-rank
  matrices from nested linear measurements.
\newblock {\em Information and Inference: A Journal of the IMA}, 5:331--351,
  2016.

\bibitem{BajwaEtAl:2010:CompressedChannelSensing}
W.~U. Bajwa, J.~Haupt, A.~M. Sayeed, and R.~Nowak.
\newblock Compressed channel sensing: A new approach to estimating sparse
  multipath channels.
\newblock {\em Proceedings of the IEEE}, 98(6):1058--1076, 2010.

\bibitem{baraniuk2010model}
R.~G. Baraniuk, V.~Cevher, M.~F. Duarte, and C.~Hegde.
\newblock Model-based compressive sensing.
\newblock {\em IEEE Transactions on Information theory}, 56(4):1982--2001,
  2010.

\bibitem{Bottcher1999}
A.~B{\"o}ttcher and B.~Silbermann.
\newblock {\em Introduction to Large Truncated Toeplitz Matrices}.
\newblock Springer New York, New York, NY, 1999.

\bibitem{brennan_optimal_2019}
M.~Brennan and G.~Bresler.
\newblock Optimal average-case reductions to sparse pca: From weak assumptions
  to strong hardness.
\newblock In {\em Conference on Learning Theory}, pages 469--470. PMLR, 2019.

\bibitem{chan_approximability_2016}
S.~O. Chan, D.~Papailliopoulos, and A.~Rubinstein.
\newblock On the {approximability} of {sparse} {PCA}.
\newblock In {\em {Proceedings of Machine Learning Research}}, volume~49, pages
  623--646, June 2016.

\bibitem{Chen2021}
Y.~Chen, J.~Fan, B.~Wang, and Y.~Yan.
\newblock Convex and nonconvex optimization are both minimax-optimal for noisy
  blind deconvolution under random designs.
\newblock {\em Journal of the American Statistical Association}, pages 1--11,
  2021.

\bibitem{dirksen2015tail}
S.~Dirksen.
\newblock Tail bounds via generic chaining.
\newblock {\em Electronic Journal of Probability}, 20:1--29, 2015.

\bibitem{eisenmann2021riemannian}
H.~Eisenmann, F.~Krahmer, M.~Pfeffer, and A.~Uschmajew.
\newblock Riemannian thresholding methods for row-sparse and low-rank matrix
  recovery.
\newblock {\em Numerical Algorithms}, 93(2):669--693, 2023.

\bibitem{eisert2021hierarchical}
J.~Eisert, A.~Flinth, B.~Gro{\ss}, I.~Roth, and G.~Wunder.
\newblock Hierarchical compressed sensing.
\newblock In {\em Compressed Sensing in Information Processing}, pages 1--35.
  Springer, Cham, 2022.

\bibitem{flinth2018sparse}
A.~Flinth.
\newblock Sparse blind deconvolution and demixing through
  $\ell_{1,2}$-minimization.
\newblock {\em Advances in Computational Mathematics}, 44, 2018.

\bibitem{flinth2021hierarchical}
A.~Flinth, B.~Gro{\ss}, I.~Roth, J.~Eisert, and G.~Wunder.
\newblock Hierarchical isometry properties of hierarchical measurements.
\newblock {\em Appl. Harm. Comp. Anal}, 58:27--49, 2021.

\bibitem{foucart2011hard}
S.~Foucart.
\newblock Hard thresholding pursuit: an algorithm for compressive sensing.
\newblock {\em SIAM Journal on Numerical Analysis}, 49:2543--2563, 2011.

\bibitem{foucart2020jointly}
S.~Foucart, R.~Gribonval, L.~Jacques, and H.~Rauhut.
\newblock Jointly low-rank and bisparse recovery: Questions and partial
  answers.
\newblock {\em Analysis and Applications}, 18:25--48, 2020.

\bibitem{FouRau2013}
S.~Foucart and H.~Rauhut.
\newblock {\em A Mathematical Introduction to Compressive Sensing}.
\newblock Birkh{\"a}user, New York, 2013.

\bibitem{geppert2019sparse}
J.~Geppert, F.~Krahmer, and D.~St{\"o}ger.
\newblock Sparse power factorization: balancing peakiness and sample
  complexity.
\newblock {\em Advances in Computational Mathematics}, 45(3):1711--1728, 2019.

\bibitem{jung2018blind}
P.~Jung, F.~Krahmer, and D.~St\"{o}ger.
\newblock Blind demixing and deconvolution at near-optimal rate.
\newblock {\em IEEE Transactions on Information Theory}, 64:704--727, 2018.

\bibitem{kech2017optimal}
M.~Kech and F.~Krahmer.
\newblock Optimal injectivity conditions for bilinear inverse problems with
  applications to identifiability of deconvolution problems.
\newblock {\em SIAM Journal on Applied Algebra and Geometry}, 1:20--37, 2017.

\bibitem{KrahmerSuprema}
F.~Krahmer, S.~Mendelson, and H.~Rauhut.
\newblock Suprema of chaos processes and the restricted isometry property.
\newblock {\em Communications on Pure and Applied Mathematics}, 67:1877--1904,
  2014.

\bibitem{Bresler2015blind}
K.~Lee, Y.~Li, M.~Junge, and Y.~Bresler.
\newblock Stability in blind deconvolution of sparse signals and reconstruction
  by alternating minimization.
\newblock In {\em International Conference on Sampling Theory and Applications
  (SampTA)}, pages 158--162, 2015.

\bibitem{lee2016blind}
K.~Lee, Y.~Li, M.~Junge, and Y.~Bresler.
\newblock Blind recovery of sparse signals from subsampled convolution.
\newblock {\em IEEE Transactions on Information Theory}, 63:802--821, 2016.

\bibitem{lee2017near}
K.~Lee, Y.~Wu, and Y.~Bresler.
\newblock Near-optimal compressed sensing of a class of sparse low-rank
  matrices via sparse power factorization.
\newblock {\em IEEE Transactions on Information Theory}, 64:1666--1698, 2017.

\bibitem{li2019rapid}
X.~Li, S.~Ling, T.~Strohmer, and K.~Wei.
\newblock Rapid, robust, and reliable blind deconvolution via nonconvex
  optimization.
\newblock {\em Applied and computational harmonic analysis}, 47:893--934, 2019.

\bibitem{li2017identifiability}
Y.~Li, K.~Lee, and Y.~Bresler.
\newblock Identifiability and stability in blind deconvolution under minimal
  assumptions.
\newblock {\em IEEE Transactions on Information Theory}, 63:4619--4633, 2017.

\bibitem{ling_2015}
S.~Ling and T.~Strohmer.
\newblock Self-calibration and biconvex compressive sensing.
\newblock {\em Inverse Problems}, 31:115002, sep 2015.

\bibitem{ling2017blind}
S.~Ling and T.~Strohmer.
\newblock Blind deconvolution meets blind demixing: Algorithms and performance
  bounds.
\newblock {\em IEEE Transactions on Information Theory}, 63:4497--4520, 2017.

\bibitem{magdon-ismail_np-hardness_2017}
M.~Magdon-Ismail.
\newblock {NP}-hardness and inapproximability of sparse {PCA}.
\newblock {\em Information Processing Letters.}, 126:35--38, Oct. 2017.

\bibitem{needell2009cosamp}
D.~Needell and J.~A. Tropp.
\newblock {CoSaMP}: Iterative signal recovery from incomplete and inaccurate
  samples.
\newblock {\em Applied and computational harmonic analysis}, 26:301--321, 2009.

\bibitem{NetrapalliAlternating}
P.~Netrapalli, P.~Jain, and S.~Sanghavi.
\newblock Phase retrieval using alternating minimization.
\newblock {\em IEEE Trans. Sign. Proc.}, 63:4814--4826, 2015.

\bibitem{cupy_learningsys2017}
R.~Okuta, Y.~Unno, D.~Nishino, S.~Hido, and C.~Loomis.
\newblock Cupy: A numpy-compatible library for nvidia gpu calculations.
\newblock In {\em Proceedings of Workshop on Machine Learning Systems
  (LearningSys) in The Thirty-first Annual Conference on Neural Information
  Processing Systems (NIPS)}, 2017.

\bibitem{scikit-learn}
F.~Pedregosa, G.~Varoquaux, A.~Gramfort, V.~Michel, B.~Thirion, O.~Grisel,
  M.~Blondel, P.~Prettenhofer, R.~Weiss, V.~Dubourg, J.~Vanderplas, A.~Passos,
  D.~Cournapeau, M.~Brucher, M.~Perrot, and E.~Duchesnay.
\newblock Scikit-learn: Machine learning in {P}ython.
\newblock {\em Journal of Machine Learning Research}, 12:2825--2830, 2011.

\bibitem{RauhutCirculant}
H.~Rauhut, J.~Romberg, and J.~A. Tropp.
\newblock Restricted isometries for partial random circulant matrices.
\newblock {\em Applied and Computational Harmonic Analysis}, 32:242--254, 2012.

\bibitem{roth2018hierarchical}
I.~Roth, A.~Flinth, R.~Kueng, J.~Eisert, and G.~Wunder.
\newblock Hierarchical restricted isometry property for {K}ronecker product
  measurements.
\newblock In {\em 2018 56th Annual Allerton Conference on Communication,
  Control, and Computing (Allerton)}, pages 632--638. IEEE, 2018.

\bibitem{hiHTP}
I.~{Roth}, M.~{Kliesch}, A.~{Flinth}, G.~{Wunder}, and J.~{Eisert}.
\newblock Reliable recovery of hierarchically sparse signals for {G}aussian and
  {K}ronecker product measurements.
\newblock {\em IEEE Transactions on Signal Processing}, 68:4002--4016, 2020.

\bibitem{rudelson2008sparse}
M.~Rudelson and R.~Vershynin.
\newblock On sparse reconstruction from {Fourier and Gaussian} measurements.
\newblock {\em Communications on Pure and Applied Mathematics},
  61(8):1025--1045, 2008.

\bibitem{genericChaining}
M.~Talagrand.
\newblock {\em The generic chaining. Upper and Lower Bounds of Stochastic
  Processes.}
\newblock Springer, Berlin Heidelberg, 2005.

\bibitem{highdimprob}
R.~Vershynin.
\newblock {\em High-Dimensional Probability. An Introduction with Applications
  in Data Science.}
\newblock Cambridge University Press, 2018.

\bibitem{Wunder2023_CommNet}
G.~Wunder, A.~Flinth, D.~Becker, and B.~Gro{\ss}.
\newblock Mimicking {DH} key exchange over a full duplex wireless channel via
  bisparse blind deconvolution.
\newblock In {\em 6th International Conference on Advanced Communication
  Technologies and Networking, CommNet 2023, Rabat, Morocco, December 11-13,
  2023}, pages 1--8, Rabat, 2023. {IEEE}.

\bibitem{wunder2024perfect}
G.~Wunder, A.~Flinth, D.~Becker, and B.~Gro{\ss}.
\newblock Perfectly secure key agreement over a full duplex wireless channel.
\newblock {\em CoRR}, abs/2404.06952, 2024.

\bibitem{wunder2018secure}
G.~Wunder, I.~Roth, R.~Fritschek, B.~Gro{\ss}, and J.~Eisert.
\newblock Secure massive {IoT} using hierarchical fast blind deconvolution.
\newblock In {\em 2018 IEEE Wireless Communications and Networking Conference
  Workshops (WCNCW)}, pages 119--124. IEEE, 2018.

\bibitem{Wunder2019_TWC}
G.~{Wunder}, S.~{Stefanatos}, A.~{Flinth}, I.~{Roth}, and G.~{Caire}.
\newblock Low-overhead hierarchically-sparse channel estimation for multiuser
  wideband massive mimo.
\newblock {\em IEEE Transactions on Wireless Communications}, 18:2186--2199,
  April 2019.

\end{thebibliography}

\end{document}